\documentclass[a4paper,fleqn,twocolumn]{cas-dc}

\long\def\comment#1{}

\newcommand{\be}{\begin{equation}}
\newcommand{\ee}{\end{equation}}
\usepackage{amsthm}
\newtheorem{Thm}{Theorem}          
\newtheorem{lemma}[Thm]{Lemma}   
\newtheorem{definition}{Definition}
\newtheorem{remark}{Remark} 

\newtheorem*{proof*}{Proof}
\newtheorem{assumption}{Assumption}

\newfont{\bbb}{msbm10 scaled 700}

\newfont{\bb}{msbm10 scaled 1100}

\newcommand{\RNum}[1]{\uppercase\expandafter{\romannumeral #1\relax}}

\usepackage{dsfont}
\usepackage{amsmath}

\usepackage{circuitikz}
\usepackage{booktabs}
\usepackage{amssymb}
\usepackage{algorithm}
\usepackage{algpseudocode}

\usepackage{graphicx}
\usepackage{epstopdf}
\usepackage{multirow}
\usepackage{cite}
\usepackage{IEEEtrantools}

\usepackage[numbers]{natbib}
\usepackage{subcaption}
\usepackage{placeins}

\usepackage{float}
\def\tsc#1{\csdef{#1}{\textsc{\lowercase{#1}}\xspace}}
\tsc{WGM}
\tsc{QE}
\tsc{EP}
\tsc{PMS}
\tsc{BEC}
\tsc{DE}

\begin{document}
\let\WriteBookmarks\relax
\def\floatpagepagefraction{1}
\def\textpagefraction{.001}
\shorttitle{Stochastic MPC under Heavy-Tailed Disturbances}
\shortauthors{X. Ye and W. Tang}

\title [mode = title]{Stochastic MPC under Heavy-Tailed Disturbances: An Extreme Value Theory Approach}                     
\tnotemark[1]

\tnotetext[1]{This research
   project is funded by National Science Foundation (NSF) CBET Award \#2414369.}

\author[1]{Xiuzhen Ye}

\credit{formal analysis; investigation; methodology; software; writing - original draft, writing - editing}

\affiliation[1]{organization={Department of Chemical and Biomolecular Engineering, North Carolina State University},
    city={Raleigh},
    state = {NC},
    country={United States}}

\author[1]{Wentao Tang} 
 \cormark[1]
\ead{wtang23@ncsu.edu} 

\credit{conceptualization; funding acquisition; methodology; project administration; writing - review}

\cortext[cor2]{Corresponding author}

\begin{abstract} 
Safety-critical control systems must contend with disturbances whose extreme deviations occur far more frequently than classical light-tailed models predict. Existing stochastic MPC (SMPC) formulations tighten constraints using an assumed distribution, a moment bound, or a finite scenario sample, each of which degrades under an unknown heavy-tailed disturbance. This paper develops an SMPC formulation for linear systems under heavy-tailed disturbances that is only assumed to be regular varying, replacing these approaches with an explicit extreme value theory (EVT) characterization of the tube error tail that is asymptotically exact. We further show that closed-loop dynamics induce temporal clustering of rare excursions across the prediction horizon, and characterize this clustering through a closed-form extremal index estimable from data. The resulting $\theta$-corrected constraint bounds the probability of a rare-event episode over the horizon, rather than only the marginal per-step exceedance probability. Simulation on a nonlinear unicycle navigating past an obstacle under Student-$t$ disturbances validates both approaches and demonstrates reduced frequencies of safety constraint violations.
\end{abstract}

\raggedbottom
\onecolumn
\begin{highlights}
\item The distribution of state under heavy-tailed noise is captured by Extreme Value Theory (EVT).
\item Stochastic MPC constrains rare excursions using EVT-based quantification.
\item An extremal index captures the excursions' clustering at consecutive times.
\item A clustering-corrected constraint bounds the risk over the horizon.
\end{highlights}

\begin{keywords}
Stochastic MPC \sep Extreme value theory \sep Rare event \sep Safety critical control
\end{keywords}

\maketitle
\flushbottom

\section{Introduction}
Safety-critical control systems must account for disturbances stemming from a variety of sources, including sensor faults, actuator glitches, and other rare operational upsets. When these disturbances are random variables, their extreme deviations can be unbounded and far more frequent. As in heavy-tailed distributions, the extreme deviations decay polynomially rather than exponentially in magnitude, which is much more severe than classical light-tailed noise models predict, as is characteristic of heavy-tailed phenomena observed across many applied domains~\citep{Resnick_2007}. While nominal controller that ignores disturbance risk offers no guarantee against rare excursions, guaranteeing robustness in a strictly deterministic sense, as in robust MPC~\citep{TubeMPC_05}, is usually infeasible in this setting, since no finite bound on the disturbance exists. Chance-constrained stochastic MPC (SMPC)~\citep{Mesbah2016_overview, FarinaGiulioniScattolini2016} is the standard way of balancing these two extremes, enforcing safety with a specified low violation probability (rather than for every possible realization). The theoretical foundations of chance-constrained SMPC are established typically by proving the measurability of closed-loop trajectories, probabilistic recursive feasibility, and consequently, asymptotic stability in expectation or in probability, see, e.g.,~\citep{McAllisterRawlings2022}
 
Existing chance-constrained SMPC formulations handle the disturbance distribution in essentially three ways, and each relies on assumptions whose validity is not guaranteed under a genuinely heavy-tailed regime. The three approaches are (i) an analytic, distribution-specific tightening, (ii) a moment-based tightening using Cantelli--Chebyshev-type inequalities, and (iii) a scenario-based approach that approximates the chance constraint through sampling.
In the first approach, the disturbance distribution is assumed to be Gaussian or otherwise fully specified, and the chance constraint on a linear combination of the state is tightened by inverting the assumed distribution's cumulative distribution function at the target violation probability, so that the size of the tightening results directly from the tail of the assumed distribution. The tube-based mixed stochastic-deterministic SMPC of~\citep{Mesbah_JPC_19} is a hybrid instance of this approach, splitting the uncertainty into a bounded, deterministic component handled by a classical robust reachable-set tube~\citep{TubeMPC_05}, that is, a fixed set guaranteed to contain the state under every disturbance realization, and a stochastic component with a known distribution whose chance constraint is tightened the same way. This approach is the most tractable, but it ties the guarantee to the assumed distribution and degrades if the disturbance is misspecified, or unknown a priori. 
 
The second approach relies on Cantelli--Chebyshev-type inequalities~\citep{MarshallOlkin1979}, which bound the probability that a random variable deviates far from its mean using only the mean and variance, with no assumption on the shape of the distribution. The joint chance-constrained formulation of~\citep{PaulsonIJC17} follows this moment-based approach, building on the distributionally robust reformulation of~\citep{CalafioreElGhaoui2006}, that is, a reformulation guaranteed to hold under the worst distribution consistent with the known mean and variance, valid for arbitrary disturbance distributions provided the variance is finite.
This worst-case tightness comes at a cost, since the required back-off from the constraint boundary, needed to guarantee a violation probability of at most $\epsilon$, is $\sigma\sqrt{(1-\epsilon)/\epsilon}$~\citep{CalafioreElGhaoui2006}, which grows like $\sigma/\sqrt{\epsilon}$ as $\epsilon \to 0$. This is far more conservative than the $\sigma\sqrt{\log(1/\epsilon)}$ back-off achieved when the full distribution is known, for example when the disturbance is Gaussian, meaning the moment-only bound demands a substantially larger safety margin to guarantee the same low violation probability. This limitation persists even in recent extensions to nonlinear systems, whose reachable-set constructions for unbounded process noise still require an assumed bound on the disturbance covariance~\citep{KohlerZeilinger2025}. Furthermore, the approach is clearly invalid when the disturbance is heavy-tailed, which we consider as the setting in the present paper.

The third, largely computational approach formulates the chance-constrained problem as a large-scale stochastic program and approximates it via scenario sampling combined with parallel decomposition~\citep{KumarZavala2019, JalvingShinZavala2022, KimPetraZavala2019}, where each sampled disturbance realization becomes one scenario in the optimization and the resulting large program is split across multiple computing nodes to keep the solution time manageable. Such methods accommodate arbitrary disturbance distributions in principle and scale to large systems, but their accuracy is fundamentally limited by the number of scenarios sampled, since rare events are, by construction, underrepresented in any finite scenario set. Moreover, for heavy-tailed distributions without a properly defined expectation, the law of large numbers may not even hold, which invalidates the scenario sampling approach~\citep{Durrett2019}.

The gap left by these three approaches is precisely the regime in which the disturbance is heavy-tailed, so that no assumed form of distribution, no finite moment, and no computationally tractable number of samples necessarily characterizes the rare-event risk captured in the tail of its distributions. Extreme Value Theory (EVT) is the branch of probability theory specifically concerned with the tail behavior of a distribution~\citep{Coles2001}, and it closes this gap directly by characterizing risk through the tail index $\alpha$, which describes how quickly the probability of an extreme deviation decays as that deviation grows large. A smaller $\alpha$ means a heavier tail in which large deviations remain comparatively likely, and a larger $\alpha$ means a lighter tail in which they become rare very quickly. This tail index remains well defined as long as the disturbance is i.i.d.\ and ``regularly varying'', a considerably weaker requirement than a fully specified distribution or a finite variance. To the best of our knowledge, this tail index characterization has not previously been used to design the constraint tightening in chance-constrained MPC. EVT has, however, been utilized in stochastic decision making, including risk analysis for general stochastic systems~\citep{ArsenaultChapman2022} and tail value estimation in risk-averse reinforcement learning~\citep{SomayajiLi2024}. The tail index characterization yields two advantages over the approaches surveyed above. First, the resulting quantile characterization is asymptotically exact as the target violation probability $\epsilon \to 0$, rather than conservative like the moment-based bound, or inaccurate like the scenario-based approach. Second, the tail characterization can be computed offline and built into a fixed constraint tightening, following the same tube-based approach as classical robust and stochastic MPC, so the resulting online optimization retains the computational structure of nominal linear MPC.

The tightening can be computed offline directly from data, rather than from an assumed distribution or a moment bound, as in the sample-based tightening of~\citep{Lorenzen2017} and the scenario-based probabilistic reachable sets of~\citep{HewingZeilinger2020}.  
Neither approach models the disturbance beyond the observed samples. Consequently, the resulting guarantee holds only at a finite-sample confidence level rather than exactly, and neither is tailored to a heavy-tailed regime. 
By contrast, the EVT-based tail index characterization in this work models the rate of tail decay parametrically and learns the tail decay parameter from data directly. The resulting quantile therefore extrapolates beyond the largest observed disturbance.

Therefore, this paper develops an SMPC formulation that accounts for disturbances that are regularly varying with a heavy tail, replacing the invariant-set or moment-based bound on the tube error with an explicit, EVT quantile characterization. We further show that, even under i.i.d. disturbances, the closed-loop dynamics induce temporal clustering of rare excursions across the prediction horizon, and we characterize this clustering using Leadbetter's extremal index that can be estimated from data. The resulting EVT-enhanced SMPC formulation is expected to (i) retain the computational structure, and hence the tractability, of nominal linear MPC; (ii) provide asymptotically accurate, rather than conservative worst-case or moment-bound constraint tightening, as the target violation probability becomes small; and (iii) reveal the clustering structure of rare events, informing whether a design should target the marginal per-step violation probability or the rate and duration of excursion episodes.  
  
\section{Preliminaries}\label{sec_background}
Consider a system with the following discrete-time linear dynamics
\begin{align}\label{eq_sys}
    x_{k+1} = & A x_k + B u_k + w_k, \quad 
y_k = c^\top x_k + d^\top u_k,
\end{align}
where $x_k \in \mathbb{R}^{n_x}$ is the state of the system; $u_k \in {\cal U} \eqdef \{ u \in \mathbb{R}^{n_u}: G u \leq g\}$ is the manipulated control input with hard constraints; and $y_k \in \mathbb{R}$ is the output. An i.i.d. sequence of random disturbances $\{w_k\}_{k\in\mathbb{N}}$ is introduced. In later discussions, we will consider only the case of $d=0$, without losing generality. 
 \subsection{Robust and stochastic MPC under light-tailed uncertainty}
Robust MPC treats the disturbance $w_k$ as confined to a known compact set $\mathcal{W}$ and enforces constraints for all provided realizations $w_k \in \mathcal{W}$~\citep{BemporadMorari1999}. Several architectures can realize this guarantee, among which tube-based robust MPC~\citep{TubeMPC_05} is most directly relevant. It achieves the guarantee by bounding the error between the true and nominal predicted state. Under a fixed stabilizing feedback gain $K$, the error after $i$ steps lies in the reachable set $\mathcal{R}^e_i = \bigoplus_{j=0}^{i-1} A_K^j \mathcal{W}$, where $A_K \triangleq A - BK$ is the closed-loop state matrix and $\oplus$ denotes the Minkowski sum. The set $\mathcal{R}^e_i$ therefore collects all values that the error can take after $i$ steps, over all disturbance sequences drawn from $\mathcal{W}$. As $i \to \infty$, these sets converge to a minimal robustly positively invariant set $\mathcal{R}^e_\infty$, meaning that once the error enters $\mathcal{R}^e_\infty$, it remains there for every subsequent disturbance realization in $\mathcal{W}$. This guarantee is deterministic and requires that the disturbances $w_k$ in set $\mathcal{W}$ stay bounded.

Stochastic MPC (SMPC)~\citep{Mesbah2016_overview} relaxes the requirement of constraint satisfaction for every disturbance realization to satisfaction with a specified probability, such that for all $\ k \geq 1$, the state satisfies the following
\begin{equation}\label{eq_chance_const}
    \mathbb{P}\{ H x_k \leq h\} \geq 1 - \beta,
\end{equation}
where $H \in \mathbb{R}^{r \times n_x}$, $h \in \mathbb{R}^r$, and $\beta \in [0,1]^r$ is a vector containing the allowed violation probability for $r$ state constraints. We let $\mathcal{X} \triangleq \{x \in \mathbb{R}^{n_x} : Hx \leq h\}$ and $\mathcal{U}$ defined in~\eqref{eq_sys} denote the state and input constraint sets, respectively. Existing SMPC formulations handle the disturbance distribution through one of three approaches. An analytic approach assumes a known disturbance distribution and tightens the chance constraint in closed form or by CDF inversion~\citep{Mesbah_JPC_19}. A moment-based approach instead uses only the mean and variance of the disturbance, via Cantelli--Chebyshev-type bounds~\citep{MarshallOlkin1979,PaulsonIJC17,CalafioreElGhaoui2006}. A scenario-based approach approximates the chance constraint through sampling and parallel decomposition~\citep{KumarZavala2019, JalvingShinZavala2022, KimPetraZavala2019}. None of these three approaches directly targets the regime where the disturbance is heavy-tailed. The first two implicitly require the noise to have light tails or, at least, finite variance, and the scenario-based approach fails to capture rare-event risk because a finite sample underrepresents the tail. This motivates an approach that characterizes the tail directly using extreme value theory (EVT).
\subsection{Extreme value theory and regular variation}
\begin{definition}[Regular variation~\citep{Resnick_2007}]\label{def_reg_var}
A random variable $X$ is \emph{regularly varying with tail index $\alpha>0$} if
\[
\mathbb{P}(|X|>x) \sim L(x)\,x^{-\alpha}, \qquad x \to \infty,
\]
where the function $L$ is \emph{slowly varying}, i.e., $L(tx)/L(x) \to 1$ as $x\to\infty$ for every fixed $t>0$. The term $x^{-\alpha}$ captures the dominant polynomial decay rate of the tail probability, while $L$ is a correction that allows for general heavy-tailed distributions without forcing an exact power law.  
\end{definition} 
Note that distributions with $\alpha<2$ have infinite variance, and $\alpha\le1$ have infinite mean. A Student $t$ distribution with $\nu$ degrees of freedom satisfies $\mathbb{P}(|X|>x) \sim \frac{2\,\Gamma\!\left(\frac{\nu+1}{2}\right)}{\sqrt{\pi}\,\Gamma\!\left(\frac{\nu}{2}\right)}\,\nu^{\frac{\nu-2}{2}}\, x^{-\nu}$, $x\to\infty$ with $\Gamma$ being the Gamma function. Hence, its tail index is $\alpha=\nu$, 
confirming the polynomial tail decay at rate $\nu$. A Cauchy distribution, the special case $\nu=1$, has non-convergent mean and an infinite variance, whereas a $t$-distribution with $\nu=3$ has a finite mean but an infinite fourth moment. Smaller $\alpha$ corresponds to a heavier tail, with extreme deviations that are polynomially, rather than exponentially, more frequent than under classical light-tailed distributions such as Gaussian~\citep{Resnick_2007}.

Definition~\ref{def_reg_var} characterizes how fast the probability of an extreme value decays for a scalar random variable, but not whether those extremes tend to be large and positive or large and negative. This is captured by the following tail balance parameters.
\begin{definition}[Tail balance~\citep{Resnick_2007}]\label{def_tail_balance}
A regularly varying scalar random variable $\zeta$ with tail index $\alpha$ has \emph{tail-balance parameters} $(p,q)$, $p,q\ge0$, $p+q=1$, if
\begin{align}
\nonumber
& \mathbb{P}(\zeta>x) \sim p\,\mathbb{P}(|\zeta|>x), \\
\nonumber
& \text{and} \quad \mathbb{P}(\zeta\le-x) \sim q\,\mathbb{P}(|\zeta|>x), \qquad x\to\infty.
\end{align}
\end{definition}
A scalar random variable has only two possible directions for its extremes, positive or negative, captured by the balance parameters $(p,q)$. The tools in EVT characterize the risk of rare excursions directly from the asymptotic tail behavior of the disturbance, without requiring either a bounded support or a finite variance.
 
 \subsection{Problem Statement}\label{sec_prelim}
In system~\eqref{eq_sys}, we now suppose that the disturbance sequence $\{w_k\}_{k\in\mathbb{N}}$ has tail index $\alpha>0$, regularly varying as in Definition~\ref{def_reg_var}, that is, $\mathbb{P}\!\left( \|w_k\| > r \right) \sim L(r) r^{-\alpha}, r \to \infty$.
Unlike typical light-tailed or bounded-support disturbances, we allow $w_k$ to be unbounded and heavy-tailed, in which extreme deviations occur far more frequently than under classic light-tailed models~\citep{Resnick_2007}. This poses a fundamental challenge for safety-critical control. Guaranteeing safety in a strictly deterministic sense, as in robust MPC, is infeasible in this setting, while a nominal controller offers no probabilistic guarantee against rare excursions. This motivates the probabilistic, chance-constrained approach in~\eqref{eq_chance_const}, well suited to the heavy-tailed regime considered in this paper. We assume the following throughout the paper.
\begin{assumption}\label{assump1}
    (i) the pair $(A,B)$ in~\eqref{eq_sys} is controllable; (ii) the sets $\mathcal{X}$ and $\mathcal{U}$ contain the origin as an interior point; (iii) $\mathcal{U}$ is compact and convex.
\end{assumption}
\begin{assumption}\label{assump2}
    The disturbance sequence $\{w_k\}_{k \in \mathbb{N}}$ in~\eqref{eq_sys} is i.i.d. and regularly varying with tail index $\alpha > 0$.
\end{assumption}
 
Under Assumption~\ref{assump2}, the proposed approach targets the heavy-tailed regime that is a deliberate complement to classical light-tailed SMPC~\citep{Mesbah2016_overview, FarinaGiulioniScattolini2016}. In practice, the tail-index estimator itself serves as a gatekeeper between the two regimes. It converges to a finite value on a genuinely heavy-tailed sample, and a diverging or unstable estimate on a light-tailed one is itself the signal to fall back to classical SMPC.

Throughout the paper, for simplicity we assume that $y$ is single-dimensional, so the constraint we aim to satisfy in~\eqref{eq_chance_const} is
\begin{equation}\label{eq_scalar_chance_const}
\mathbb{P}(c^\top x_k \le y_{\max}) \ge 1-\epsilon, \qquad k \in \mathbb{N}.
\end{equation}
The remaining rows of~\eqref{eq_chance_const} are treated by the same analysis, applied separately along each row's direction.
 
 \section{Rare Event Constrained SMPC}\label{sec_rare_event_SMPC}
\subsection{Tail Invariance}
This paper focuses on the risk that lies in the extreme of state deviations rather than its typical behavior. Therefore, enforcing~\eqref{eq_chance_const} for small $\beta$ is fundamentally a probabilistic rare event constraint. Classical SMPC, developed under light-tailed assumptions, does not characterize how such rare excursions propagate through the closed-loop dynamics. We develop this characterization in the following lemma, showing explicitly how the disturbance tail propagates through the closed-loop dynamics and enters the constraints.

\begin{lemma}\label{lemma_state_tail_invariant}
Suppose the control input is given by a stabilizing linear feedback law $u_k = -K x_k$, such that $\rho(A-BK)<1$ being the spectral radius of the closed-loop matrix $A-BK$. Then $x_k = \sum_{j=0}^\infty (A-BK)^j w_{k-1-j}$: the system admits a stationary process as its solution and each $x_k$ is regularly varying with tail index $\alpha$. That is, for every $k$,
\begin{equation}\label{eq_020301}
\mathbb{P}\!\left( \|x_k\| > r \right)
\sim C_K r^{-\alpha},
\qquad r\to\infty,
\end{equation}
where the tail index $\alpha$ is inherited from the disturbance and the constant $C_K$ depends on the closed-loop dynamics.
\end{lemma}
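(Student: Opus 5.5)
The plan has three steps. First establish that the series converges, which gives the stationary solution. Then identify the tail of $x_k$ as a weighted sum of disturbance tails. Finally take the norm.

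Write $A_K \triangleq A-BK$. Since $\rho(A_K)<1$, Gelfand's formula gives constants $M\ge1$ and $\gamma\in(\rho(A_K),1)$ with $\|A_K^j\|\le M\gamma^j$ for all $j$. Iterating $x_{k+1}=A_Kx_k+w_k$ backwards $n$ times yields
\[
x_k=A_K^n x_{k-n}+\sum_{j=0}^{n-1}A_K^j w_{k-1-j}.
\]
We will show that $\sum_j A_K^j w_{k-1-j}$ converges almost surely. Fix $\delta>0$ small, with $\delta<\alpha$ and $\delta\le1$. Regular variation (Definition~\ref{def_reg_var}) implies $\mathbb{E}\|w\|^\delta<\infty$. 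Then $\mathbb{E}\sum_j \|A_K^j\|^\delta\|w_{k-1-j}\|^\delta\le M^\delta\mathbb{E}\|w\|^\delta\sum_j\gamma^{j\delta}<\infty$. Because $\delta\le1$, subadditivity of $t\mapsto t^\delta$ then gives absolute almost-sure convergence. Since the $w$'s are i.i.d., the resulting process is strictly stationary. It is the unique stationary solution, because $A_K^nx_{k-n}\to0$ in probability for any tight initialization. This establishes the series representation in the statement.

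For the tail we use the standard result on infinite linear combinations of i.i.d.\ regularly varying vectors (Resnick~\citep{Resnick_2007}; in the Mikosch--Samorodnitsky form). We strengthen Assumption~\ref{assump2} to \emph{multivariate} regular variation: $\mathbb{P}(\|w\|>r)\sim L(r)r^{-\alpha}$, and $w/\|w\|$ conditioned on $\|w\|>r$ converges in law to a spectral measure $\sigma$ on the unit sphere. In the scalar case this reduces to the tail balance of Definition~\ref{def_tail_balance}. The argument has three stages.
\begin{enumerate}
\item For a finite sum $S_n=\sum_{j<n}A_K^jw_{k-1-j}$, the single-big-jump principle holds. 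Independence gives $\mathbb{P}(\|w_i\|>\eta r,\|w_l\|>\eta r)=O(\mathbb{P}(\|w\|>r)^2)$ for $i\neq l$, so $\mathbb{P}(\|S_n\|>r)\sim\sum_{j<n}\mathbb{P}(\|A_K^jw\|>r)$.
\item Each term satisfies $\mathbb{P}(\|A_K^jw\|>r)\sim \mathbb{E}_\sigma\|A_K^j\Theta\|^\alpha\,\mathbb{P}(\|w\|>r)$. This follows from the continuous mapping theorem applied to the spectral measure, with $\Theta\sim\sigma$.
\item Pass $n\to\infty$, which requires controlling the remainder $R_n=x_k-S_n$.
\end{enumerate}

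The last stage is where I expect the main obstacle. We need
\[
\lim_{n\to\infty}\limsup_{r\to\infty}\frac{\mathbb{P}(\|R_n\|>\eta r)}{\mathbb{P}(\|w\|>r)}=0 \quad\text{for every }\eta>0.
\]
For $\alpha\le1$, I would use Markov's inequality with the $\delta$-moment and a Potter bound on the tail ratio. For $\alpha>1$, I would bound $\|R_n\|\le\sum_{j\ge n}M\gamma^j\|w_{k-1-j}\|$ and invoke the known tail bound for geometrically weighted sums. Because the weights decay geometrically, the summability condition $\sum_j\|A_K^j\|^{\alpha\pm\delta}<\infty$ of that theorem holds trivially. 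With this established, a sandwich argument gives $\mathbb{P}(\|x_k\|>r)\sim C\,\mathbb{P}(\|w\|>r)$, where
\[
C=\sum_{j\ge0}\mathbb{E}_\sigma\|A_K^j\Theta\|^\alpha.
\]
This constant is finite by the geometric bound.

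This yields~\eqref{eq_020301} with $C_K r^{-\alpha}$ interpreted as $C\,L(r)r^{-\alpha}$. If $L$ is asymptotically constant, $L\to L_0$, as for Student-$t$, then $C_K=L_0\sum_j\mathbb{E}_\sigma\|A_K^j\Theta\|^\alpha$, which visibly depends only on $K$ through $A_K$. The same computation applied to $c^\top x_k$ gives its own tail-balance parameters, which is the form needed for the later constraint~\eqref{eq_scalar_chance_const}.
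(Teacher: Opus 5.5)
Your proof has the same skeleton as the paper's: geometric decay of $\|A_K^j\|$ gives an almost surely convergent stationary series, and the tail of $x_k$ then comes from the theory of infinite linear combinations of i.i.d.\ regularly varying vectors. The difference is that the paper simply invokes the tail-preservation theorem of Hult and Samorodnitsky. You unpack it instead: single big jump for finite sums, a spectral-measure computation for each term, remainder control, and a sandwich.

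Your route is more precise than the paper's statement. You correctly note that Assumption~\ref{assump2} is not enough, since it only concerns $\|w\|$. Without a spectral measure, the ratio $\mathbb{P}(\|A_K^jw\|>r)/\mathbb{P}(\|w\|>r)$ can oscillate, and the cited theorem tacitly assumes multivariate regular variation. You also correctly read $C_Kr^{-\alpha}$ as $C\,L(r)r^{-\alpha}$, and you obtain the explicit constant $C=\sum_{j\ge0}\mathbb{E}_\sigma\|A_K^j\Theta\|^\alpha\ge1$. The paper's route is shorter but rests on that unstated hypothesis. Your convergence, stationarity, finite-sum and per-term steps are sound.

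One step, as written, does not go through. For $\alpha\le1$, Markov's inequality with the untruncated $\delta$-moment, $\delta<\alpha$, only gives $\mathbb{P}(\|R_n\|>\eta r)=O(r^{-\delta})$. This decays more slowly than $\mathbb{P}(\|w\|>r)$, and no Potter bound can close the gap of order $r^{\alpha-\delta}$. One fix is to truncate the summands at level $r$ and apply Markov with a moment of order larger than $\alpha$ to the truncated part; Karamata's theorem then gives the correct order.

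The simpler fix is to use, for every $\alpha>0$, the argument you already propose for $\alpha>1$. Dominate $\|R_n\|$ by $\sum_{j\ge n}M\gamma^j\|w_{k-1-j}\|$ and apply the Cline/Davis--Resnick tail result for weighted sums of the i.i.d.\ nonnegative variables $\|w_j\|$. Its condition is $\sum_j c_j^{\delta}<\infty$ for some $\delta<\alpha$, $\delta\le1$, the same as in Lemma~\ref{lemma_moving_avg_max}. This holds for geometric weights regardless of $\alpha$, and it needs no centering. It gives
\[
\limsup_{r\to\infty}\frac{\mathbb{P}(\|R_n\|>\eta r)}{\mathbb{P}(\|w\|>r)}\le \eta^{-\alpha}M^\alpha\sum_{j\ge n}\gamma^{j\alpha},
\]
which tends to zero as $n\to\infty$. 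That is exactly what your sandwich argument needs.
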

\begin{proof}
Since $\rho(A-BK)<1$, $\|(A-BK)^j\|$ decays geometrically in $j$, so $x_k=\sum_{j=0}^\infty (A-BK)^j w_{k-1-j}$ converges almost surely to its stationary solution. Since $\{w_k\}$ is i.i.d.\ and regularly varying with tail index $\alpha$ and the coefficients $(A-BK)^j$ are summable, the tail-preservation theorem in~\citep{hult2008tail} gives that $x_k$ is regularly varying with the same tail index $\alpha$, with $\mathbb{P}(\|x\|>r)\sim C_K r^{-\alpha}$.
\end{proof}
Lemma~\ref{lemma_state_tail_invariant} shows that the closed-loop state inherits the tail index $\alpha$ from disturbances, with the feedback gain $K$ entering only through the scale constant $C_K$. Since no choice of stabilizing feedback can alter the tail index $\alpha$ of the closed-loop state, the mitigation of extreme excursions may not be satisfactorily achieved through the design of a nominal controller, which leaves the state distributed with a heavy tail. Instead, we address this directly at the level of constraint enforcement, through the design of a nominal controller in a tube-based scheme rather than through the feedback gain $K$. This motivates an extreme value tightening of the chance constraint in~\eqref{eq_chance_const}, where the safety margin is set according to the tail behavior established in Lemma~\ref{lemma_state_tail_invariant}. 

\subsection{Predictions and tube decomposition}
Tube-based MPC decomposes the predicted state into a nominal trajectory, evolving under a nominal input as if undisturbed, and an error trajectory capturing the disturbance's effect, then tightens the constraints on the nominal trajectory to account for this error. Stochastic tube-based MPC follows the same decomposition but characterizes the error using the distribution of the disturbance rather than a worst-case bound. We build such a formulation that admits no bounded set and instead requires an explicit quantile characterization of the error tail.

Given the current state $x_t$ as in a system~\eqref{eq_sys}, with $\hat x_{0|t} = x_t$, the predicted dynamics under a candidate input sequence $u_{0|t},\dots,u_{N-1|t}$ over a prediction horizon of length $N$ are
\begin{equation}\label{eq_pred_dyn}
\hat x_{i+1|t} = A\hat x_{i|t} + Bu_{i|t} + w_{i|t}, \qquad i = 0,\dots,N-1,
\end{equation}
where $w_{i|t}$ denotes the random disturbance at step $t+i$. Following the dual-mode paradigm of tube MPC, we split the predicted input into a nominal component $v_{i|t}$ and a feedback correction,
$u_{i|t} = v_{i|t} - K e_{i|t}$,
where $K$ is a fixed stabilizing gain with $\rho(A_K) < 1$, $A_K = A - BK$ as in Lemma~\ref{lemma_state_tail_invariant}, and $e_{i|t}$ is the error between the predicted and nominal state, defined as
$e_{i|t} = \hat x_{i|t} - \bar x_{i|t}$.
Hence, the nominal dynamics for $\bar x_{i|t}$ are
$\bar x_{i+1|t} = A\bar x_{i|t} + Bv_{i|t}$,
and the error dynamics for $e_{i|t}$ are
\begin{align}\label{eq_error_dyn}
e_{i+1|t} &= A_K e_{i|t} + w_{i|t}.
\end{align}
Correspondingly, from $y_{i|t} = c^\top \hat x_{i|t} = c^\top \bar x_{i|t} + c^\top e_{i|t}$, the scalar rare-event metric associated with constraint~\eqref{eq_scalar_chance_const}
is decomposed into a deterministic nominal part $c^\top \bar{x}_{i|t}$ and a stochastic error part $Z_i \triangleq c^\top e_{i|t}$, driven entirely by the disturbance sequence. The scalar constraint~\eqref{eq_scalar_chance_const} evaluated on the predicted trajectory requires $\mathbb{P}(y_{i|t}\le y_{\max}) \ge 1-\epsilon$ for $i=1,\dots,N$.

\subsection{EVT-based propagation of rare-event risk}
As discussed in Section~\ref{sec_background}, classical tube MPC bounds $e_{i|t}$ via the reachable sets $\mathcal{R}^e_i$, which require a bounded disturbance set $\mathcal{W}$. Since $w_{i|t}$ is unbounded under Assumption~\ref{assump2}, no such bounded set exists. We instead track how the tail of $e_{i|t}$ accumulates directly, through the scale of its regularly varying tail.
Solving~\eqref{eq_error_dyn} gives $e_{i|t} = \sum_{j=0}^{i-1} A_K^j w_{i-1-j|t}$, and it yields the projected error
\begin{equation}
Z_i \eqdef c^\top e_{i|t} = \sum_{j=0}^{i-1} c^\top A_K^j\, w_{i-1-j|t}
\end{equation}
as a sum of independent regularly varying terms. Since $w$ is regularly varying as in Assumption~\ref{assump2}, so is any fixed linear projection of it. In particular, each summand $c^\top A_K^j w_{i-1-j|t}$ is regularly varying with the same tail index $\alpha$ as $w$ and a direction-dependent scale $\kappa_j \ge 0$, defined by
\begin{equation}\label{eq_kappa_def}
\mathbb{P}\big(c^\top A_K^j w > z\big) \sim \kappa_j\, z^{-\alpha}, z \to \infty,
\end{equation}
for $j = 0,\dots,i-1$.
That is, $\kappa_j$ is the tail scale coefficient of the disturbance projected onto the shrinking direction $(A_K^{j} )^{ \top}c$. By the scaling property of regularly varying tails, $\kappa_j$ inherits the geometric decay rate of $\|A_K^j\|$ raised to the power $\alpha$, so that $\kappa_j \to 0$ as $j \to \infty$.

\begin{lemma}\label{lemma_error_tail}
Let $Z_i = \sum_{j=0}^{i-1} c^\top A_K^j w_{i-1-j|t}$ with $\kappa_j\ge0$, $j=0,\dots,i-1$, the tail scale coefficients defined in~\eqref{eq_kappa_def}. Then
\begin{equation}\label{eq_error_tail}
\mathbb{P}(Z_i > z) \sim S_i(K)\, z^{-\alpha}, \ \ \ \ z \to \infty,
\end{equation}
where $S_i(K) \triangleq \sum_{j=0}^{i-1} \kappa_j$.
\end{lemma}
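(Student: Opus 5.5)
The plan is to prove the claim as a finite version of the one-big-jump principle for sums of independent regularly varying random variables. Write $Y_j \triangleq c^\top A_K^j w_{i-1-j|t}$ for $j=0,\dots,i-1$. By Assumption~\ref{assump2} the $w_{i-1-j|t}$ are i.i.d., so the $Y_j$ are independent (not identically distributed). By~\eqref{eq_kappa_def}, each $Y_j$ satisfies $\mathbb{P}(Y_j>z)\sim\kappa_j z^{-\alpha}$, and $\mathbb{P}(|Y_j|>z)=O(z^{-\alpha})$. The target is $\mathbb{P}(\sum_j Y_j>z)\sim(\sum_j\kappa_j)z^{-\alpha}$, which I will prove by induction on the number of summands. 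It therefore suffices to prove the two-term statement: if $X,Y$ are independent, $\mathbb{P}(X>z)\sim a z^{-\alpha}$, $\mathbb{P}(Y>z)\sim b z^{-\alpha}$, and both two-sided tails are $O(z^{-\alpha})$, then $\mathbb{P}(X+Y>z)\sim(a+b)z^{-\alpha}$. The induction step needs the two-sided tail of the partial sum to be $O(z^{-\alpha})$, which follows from the trivial bound $\mathbb{P}(|X+Y|>z)\le\mathbb{P}(|X|>z/2)+\mathbb{P}(|Y|>z/2)$.

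For the two-term lemma, fix $\delta\in(0,1/2)$ and use matching lower and upper bounds.

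\textbf{Lower bound.} The events $\{X>(1+\delta)z,\ Y>-\delta z\}$ and $\{Y>(1+\delta)z,\ X>-\delta z\}$ are contained in $\{X+Y>z\}$. Their intersection has probability $O(z^{-2\alpha})$, and independence gives probability $\mathbb{P}(X>(1+\delta)z)(1-o(1))$ for the first and similarly for the second. Hence $\liminf_{z\to\infty} z^{\alpha}\mathbb{P}(X+Y>z)\ge(a+b)(1+\delta)^{-\alpha}$.

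\textbf{Upper bound.} Start from
\begin{equation*}
\{X+Y>z\}\subseteq\{X>(1-\delta)z\}\cup\{Y>(1-\delta)z\}\cup\{X>\delta z,\ Y>\delta z\}.
\end{equation*}
To check this inclusion, note that if neither $X$ nor $Y$ exceeds $(1-\delta)z$ yet their sum exceeds $z$, then each exceeds $\delta z$. By independence the last event has probability $O(z^{-2\alpha})=o(z^{-\alpha})$. Hence $\limsup_{z\to\infty} z^{\alpha}\mathbb{P}(X+Y>z)\le(a+b)(1-\delta)^{-\alpha}$. Letting $\delta\downarrow0$ gives the two-term result.

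Two degenerate cases need handling.

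\emph{Zero coefficients.} Some $\kappa_j$ may be $0$, e.g.\ if $c^\top A_K^j=0$ or the projected tail is one-sided. I will interpret~\eqref{eq_kappa_def} with $\kappa_j=0$ as $\mathbb{P}(Y_j>z)=o(z^{-\alpha})$. The bounds above go through unchanged with $a$ or $b$ equal to zero. If all $\kappa_j=0$, the statement reads $\mathbb{P}(Z_i>z)=o(z^{-\alpha})$, which is the natural meaning of $\sim 0\cdot z^{-\alpha}$.

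\emph{Slowly varying factor.} Definition~\ref{def_reg_var} allows a slowly varying factor $L$. In~\eqref{eq_kappa_def} the paper has absorbed it into constants. If one wants full generality, the same argument works with $z^{-\alpha}$ replaced by $L(z)z^{-\alpha}$: the only properties used are $L((1\pm\delta)z)/L(z)\to1$ and $(L(z)z^{-\alpha})^2=o(L(z)z^{-\alpha})$.

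The main obstacle I expect is justifying the uniform $O(z^{-\alpha})$ two-sided tail bounds on each $Y_j$ and on the partial sums. This is where cancellation between a large negative and a large positive summand must be ruled out, and where the hypothesis that $w$ is regularly varying as a vector (so every linear projection has a tail $O(\|\cdot\|^{\alpha}z^{-\alpha})$) is really used. I will take this from the fact, stated just before the lemma, that fixed linear projections of $w$ are regularly varying with index $\alpha$. Given that, the left tail of $Y_j$ is $O(z^{-\alpha})$, and the cross terms in the upper bound are genuinely second order.
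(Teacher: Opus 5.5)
Your proof is correct and follows the same route as the paper. The paper simply cites the one-big-jump (convolution-closure) principle from Resnick, and your two-term induction with the $\delta$-sandwich bounds is a self-contained proof of exactly that principle.

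One simplification: the two-sided $O(z^{-\alpha})$ bounds you flag as the main obstacle are never used in your argument. Your lower bound only needs $\mathbb{P}(Y>-\delta z)\to 1$, which holds for any real random variable. Your upper bound involves only the right tails $\mathbb{P}(X>\delta z)\,\mathbb{P}(Y>\delta z)$. So cancellation by a large negative summand is already harmless without any extra hypothesis.
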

\begin{proof}
$Z_i$ is a finite sum of $i$ independent regularly varying random variables, each with tail index $\alpha$ and scale $\kappa_j$. By the one-big-jump principle for sums of regularly varying random variables~\citep[Ch.\ 2]{Resnick_2007}, the tail of $Z_i$ is governed by the heaviest summand rather than their aggregate. Therefore, it holds that $\mathbb{P}(Z_i>z) \sim \sum_{j=0}^{i-1}\mathbb{P}\big(c^\top A_K^j w_{i-1-j|t}>z\big) \sim S_i(K)\,z^{-\alpha}$ as $z\to\infty$.
\end{proof}
$S_i(K)$ is a controller-dependent tail scale factor, computable via the recursion $S_{i+1}(K) = S_i(K) + \kappa_i$, $S_0(K) = 0$. Since $\rho(A_K) < 1$, $\kappa_j \to 0$ geometrically, so $S_i(K)$ converges to a finite limit $S_\infty(K)$ as $i \to \infty$. The gain $K$ may thus be chosen, subject to $\rho(A_K)<1$, to minimize $S_\infty(K)$ and reduce the conservatism of the resulting constraint tightening.

For $\epsilon$ that is sufficiently small, the $(1-\epsilon)$-quantile can be expressed by the asymptotic quantile formula according to~\eqref{eq_error_tail}:
\begin{equation}\label{eq_quantile_asymp}
q_{e,i}(1-\epsilon) \triangleq \inf\big\{ \!q\! \in \!\!\mathbb{R}\!\!:\! \!\mathbb{P}(Z_i \!\le q) \!\ge \!\!1-\epsilon \big\} \!\sim \!\left( \!\!\frac{S_i(K)}{\epsilon} \!\!\right)^{\!1/\alpha}\!\!\!\!\!\!\!, \epsilon \to 0,
\end{equation}
i.e., the rare-event threshold scales as $\epsilon^{-1/\alpha}$. Substituting~\eqref{eq_quantile_asymp} into the chance constraint~\eqref{eq_chance_const} yields the tightened, deterministic constraint on the nominal trajectory
\begin{equation}\label{eq_tightened_constraint}
c^\top \bar x_{i|t} \le y_{\max} - \hat q_{e,i}(1-\epsilon), \qquad i = 1,\dots,N,
\end{equation}
where $\hat q_{e,i}(1-\epsilon)$ denotes the estimate of $q_{e,i}(1-\epsilon)$ used in practice, computed by substituting an estimate $\hat\alpha$ of the tail index and an estimate $\hat S_i(K)$ of the tail scale factor into~\eqref{eq_quantile_asymp}. Both are obtained once, offline, from samples of the disturbance projected onto the constraint direction.

For any regularly varying $w_k$, the exceedances $c^\top w_k - u$ of the projected disturbance $c^\top w_k$ above a high threshold $u$ are well approximated, once $u$ is large enough, by a Generalized Pareto distribution with shape parameter $\xi$ and scale parameter $\sigma_u$, whose cumulative distribution function is
\begin{equation}
G_{\xi,\sigma_u}(x) = 1 - \left(1 + \xi \frac{x}{\sigma_u}\right)^{-1/\xi}, \quad x \ge 0,
\end{equation}
for $\xi \neq 0$~\citep{Coles2001}.
The shape parameter $\xi$ relates directly to the tail index through $\xi = 1/\alpha$, so maximum-likelihood fitting of $\xi$ and $\sigma_u$ to the observed exceedances yields $\hat\alpha = 1/\hat\xi$. Combining $\hat\xi$, $\hat\sigma_u$, and the empirical fraction of samples exceeding $u$ through the standard peaks-over-threshold tail approximation then gives the base scale
$\hat\kappa_0 = \frac{n_u}{n}\left(\frac{\hat\sigma_u}{\hat\xi}\right)^{1/\hat\xi}$,
where $n_u$ is the number of exceedances of $u$ out of $n$ total samples. This threshold-exceedance estimator is a standard, consistent estimator of the tail index, that is, as the threshold $u$ grows so that the exceedance count also grows while remaining a vanishing fraction of the sample, $\hat\alpha$ converges to $\alpha$ in probability~\citep{Coles2001}. The remaining coefficients $\hat\kappa_j$, $j\ge1$, then follow from $\hat\kappa_0$ and the already-known matrices $A_K^j$, without requiring additional data.

The terminal weight in MPC formulation $P \succ 0$ is chosen as the solution to the discrete Lyapunov equation $A_K^\top P A_K - P = -Q$, so that the terminal cost $\bar x_{N|t}^\top P \bar x_{N|t}$ is a valid Lyapunov function for the nominal closed-loop dynamics, following the standard terminal-cost construction for nominal stability of MPC~\citep{MayneRawlingsRaoScokaert2000}. The resulting EVT-enhanced linear SMPC problem, with the feedback gain $K$ fixed and~\eqref{eq_tightened_constraint} imposed in place of~\eqref{eq_chance_const}, is
\be
\begin{aligned}
\min_{v_{0|t},\dots,v_{N-1|t}}
& \sum_{i=0}^{N-1} \bar x_{i|t}^\top Q \bar x_{i|t} + v_{i|t}^\top R v_{i|t} + \bar x_{N|t}^\top P \bar x_{N|t} \\
\text{s.t.}\quad
& \bar x_{0|t} = x_t, \\
& \bar x_{i+1|t} = A\bar x_{i|t} + Bv_{i|t}, \\
& v_{i|t} \in \mathbb{U}, \quad i = 0,\dots,N-1, \\
& c^\top \bar x_{i|t} \le y_{\max} - \hat q_{e,i}(1-\epsilon), \quad i = 1,\dots,N.
\end{aligned}
\label{eq_tightened_smpc}
\ee

\begin{remark}
Constraint~\eqref{eq_tightened_constraint} converts the stochastic constraint~\eqref{eq_chance_const} into a deterministic linear inequality on the nominal trajectory, with all tail characterization computed offline through $\hat q_{e,i}(1-\epsilon)$. The resulting EVT-SMPC problem thus retains the computational structure of nominal linear MPC while guaranteeing constraint satisfaction under heavy-tailed disturbances.
\end{remark}

\begin{remark}[Nonlinear systems]
The results of this paper rely on the linear closed-loop map $x_{k+1}=A_Kx_k+w_k$ to propagate the disturbance tail explicitly (Lemma~\ref{lemma_state_tail_invariant}). For nonlinear dynamics $x_{k+1}=f(x_k,u_k,w_k)$, this explicit propagation is unavailable. Two directions appear promising for future work: (i) linearizing along the nominal trajectory in the spirit of tube-based nonlinear MPC, or (ii) estimating the closed-loop tail directly from trajectory data, which requires a new theory to certify that the tail index and clustering behavior established here persist under nonlinear dynamics.
\end{remark}

\section{Clustering of Rare Events under Closed-Loop Dynamics}\label{sec_clustering}

A per-step chance constraint such as~\eqref{eq_tightened_constraint} bounds the exceedance probability at each prediction step in isolation. For a dynamical system, however, rare events are not independent across time, since a large disturbance persists through the closed-loop dynamics and continues to influence the state over several subsequent steps, so an exceedance at one step raises the likelihood of exceedance at the next. Because of this correlation, a marginal per-step probability of $\epsilon$ does not translate into the number of excursion episodes across the horizon. The rare-event constraint should therefore be posed at the level of a cluster of correlated exceedances, for which we use the notion of Leadbetter's extremal index.
\subsection{Rare-event cluster}
\begin{definition}[Extremal index\citep{Leadbetter1983}]\label{def_extremal_index}
A stationary sequence $\{Z_k\}$ has extremal index $\theta \in (0,1]$ if, for every $\tau>0$, (i) there exists a threshold sequence $\mu_n(\tau)$ such that $n\,\mathbb{P}(Z_1>\mu_n(\tau)) \to \tau$ as $n \rightarrow \infty$ and (ii) the running maximum condition
\[
\mathbb{P}\Big(\max_{1\le k\le n} Z_k \le \mu_n(\tau)\Big) \to \exp(-\theta\tau) \quad \text{as } n\to\infty,
\]
is satisfied.
\end{definition}
\begin{figure}[pos=htb]
    \centering
    \includegraphics[width=1.1\linewidth]{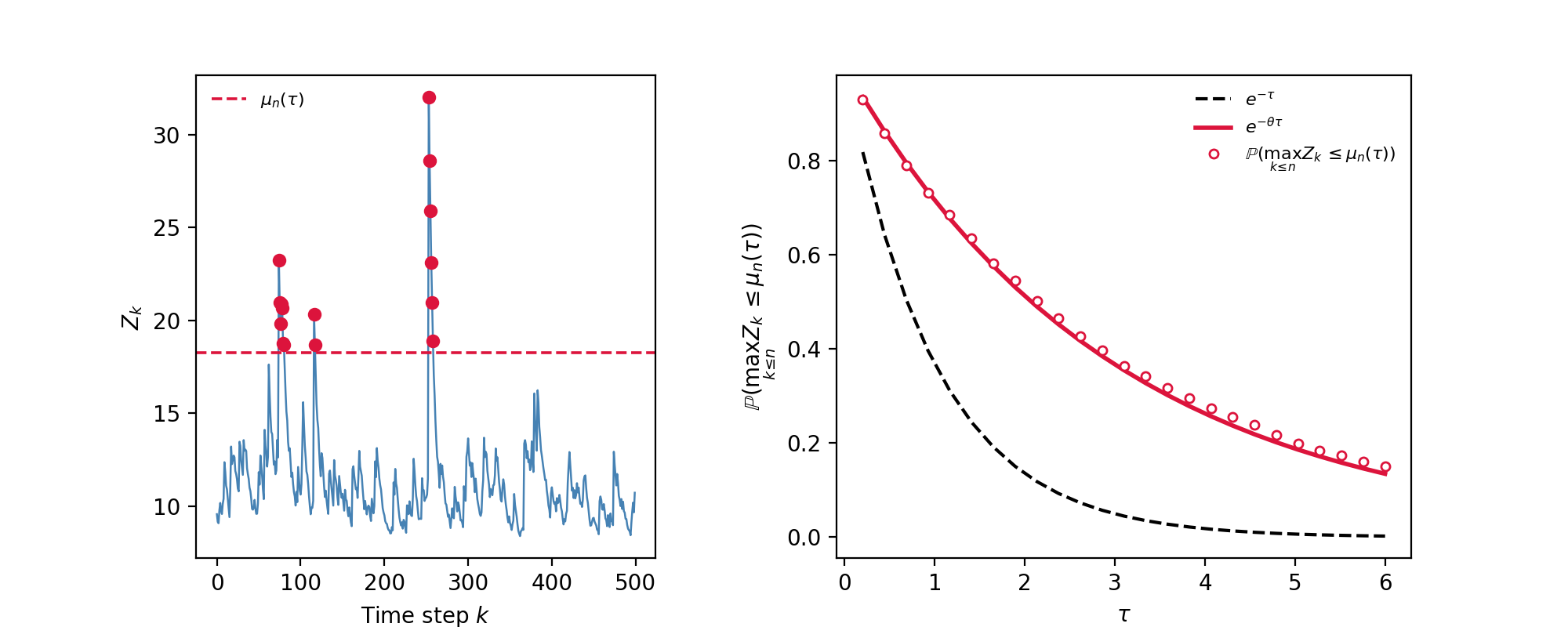}
    \caption{Extremal index illustration for process $Z_k=\rho Z_{k-1}+\zeta_k$, where $\zeta_k$ are i.i.d. from a Pareto distribution with tail index $\alpha = 2.5$. Left: clustering of exceedances above $\mu_n(\tau)$. Right: running maximum probability against $\tau$.}
    \label{fig_extremal_index_illustration}
\end{figure}
\FloatBarrier

Figure~\ref{fig_extremal_index_illustration} illustrates the extreme index. Exceedances above the threshold $\mu_n(\tau)$ arrive in a small number of multi-step clusters rather than scattering uniformly in time, since a large $\zeta_k$ persists through the dynamics and keeps several subsequent steps elevated. The running maximum probability $\mathbb{P}(\max_{k\le n}Z_k\le \mu_n(\tau))$ follows $e^{-\theta\tau}$ rather than the independent reference $e^{-\tau}$.
A value $\theta=1$ indicates that marginal exceedances occur essentially independently, whereas $\theta<1$ indicates clustering. In the later case, conditional on one exceedance, subsequent steps are far more likely to also exceed the threshold, with expected cluster length $1/\theta$~\citep{Coles2001}. As we show below in Theorem~\ref{thm_extremal_index}, the extremal index of the closed-loop projection, denoted $\theta \in (0,1]$, satisfies $\theta<1$, so these marginal exceedance events are not independent across $i$.

A consequence follows for the design of the rare event constraint.  A cluster of correlated exceedances forms an independent rare-event episode and counted once as a single catastrophic event. The design objective should therefore bound the rate of such episodes rather than the marginal per-step exceedance probability enforced by~\eqref{eq_tightened_constraint}.
This can be corrected directly through the running-maximum condition of Definition~\ref{def_extremal_index}. Treating the per-step target $\epsilon$ in~\eqref{eq_tightened_constraint} as fixed across the horizon $i=1,\dots,N$, that condition gives the probability of no exceedance over the entire horizon as approximately $\exp(-\theta N\epsilon)$, so the probability of at least one independent excursion episode over the horizon is approximately $1-\exp(-\theta N\epsilon)$ rather than the naive $\epsilon$. Replacing the per-step target in~\eqref{eq_tightened_constraint} with
\begin{equation}\label{eq_theta_corrected_epsilon}
\epsilon_\star \triangleq \frac{-\ln(1-\epsilon)}{\theta N}
\end{equation}
restores the intended episode-level guarantee, since it makes the probability of at least one excursion episode over the full horizon equal to $\epsilon$. The resulting constraint,
\begin{equation}\label{eq_theta_corrected_constraint}
c^\top \bar x_{i|t} \le y_{\max} - \hat q_{e,i}(1-\epsilon_\star), \qquad i = 1,\dots,N,
\end{equation}
differs from~\eqref{eq_tightened_constraint} only in the per-step probability level, and is computed by the same offline procedure, so it requires no change to the optimization structure of Section~\ref{sec_rare_event_SMPC}. Since $\theta<1$, $\epsilon_\star<\epsilon$, the corrected constraint is uniformly tighter than the naive one, and increasingly so as clustering strengthens.

\subsection{Extremal index of the closed-loop projection}
From Lemma~\ref{lemma_state_tail_invariant}, the projected state is exactly $Z_k \triangleq c^\top x_k = \sum_{j=0}^\infty c^\top A_K^j w_{k-1-j}$. We assume that the disturbance's tail-relevant randomness is captured by a single fixed direction $b$ as follows:  
\begin{assumption}\label{assump_dominant_direction}
The disturbance's tail-relevant randomness is effectively one-dimensional: there is a fixed direction $b \in \mathbb{R}^{n_x}$, $\|b\|=1$, such that $w_k$ admits the tail-equivalent decomposition $w_k = \zeta_k b + \eta_k$, where $\zeta_k \triangleq b^\top w_k$ is a scalar regularly varying random variable with index $\alpha$ and tail-balance parameters $(p,q)$ as in Definition~\ref{def_tail_balance}, and $\eta_k$ is negligible in the tail relative to $\zeta_k b$, i.e., $\mathbb{P}(\|\eta_k\|>z) = o\big(\mathbb{P}(|\zeta_k|>z)\big)$ as $z\to\infty$. Let $\eta_n \triangleq \inf\{z : \mathbb{P}(|\zeta_k| > z) \le n^{-1}\}$, so that $\eta_n \to \infty$ as $n\to\infty$ and $n\,\mathbb{P}(|\zeta_k| > \eta_n x) \to x^{-\alpha}$ for all $x>0$ (the scale exceeded roughly once every $n$ observations).
\end{assumption}
\begin{remark}[Multi-dimensional disturbance]
Assumption~\ref{assump_dominant_direction} is the scalar specialization of the general notion of multivariate regular variation. When several disturbances contribute comparably to the tail, $w_k$ needs to be modeled and supported on more than one direction, or on a continuum of directions on the unit sphere. The tail-scale decomposition~\eqref{eq_kappa_def} and the tail-invariance result of Lemma~\ref{lemma_state_tail_invariant} extend directly to this setting. However, the extremal index derivation does not extend as directly. Once the disturbance's extremal mass is spread over a continuum of directions, no single parameter plays the role of $\theta$ in Definition~\ref{def_extremal_index}. We leave the multi-dimensional heavy-tailed disturbance in future work.
\end{remark}

Substituting $w_k$ into $Z_k = \sum_{j=0}^\infty c^\top A_K^j w_{k-1-j}$ and retaining the tail-dominant term yield
\[
Z_k \sim \sum_{j=0}^\infty g_j\, \zeta_{k-1-j}, \qquad g_j \triangleq c^\top A_K^j b;
\]
that is, $Z_k$ and $\sum_{j=0}^\infty g_j\, \zeta_{k-1-j}$ are tail equivalent, namely $\lim_{z \to \infty} \frac{\mathbb{P}(Z_k > z)}{\mathbb{P}\big(\sum_j g_j \zeta_{k-1-j} > z\big)} = 1$.

Since $\{w_k\}$ are i.i.d. from Assumption~\ref{assump2} and $b$ is fixed, $\{\zeta_k\} = \{b^\top w_k\}$ are also i.i.d. Assumption~\ref{assump_dominant_direction} additionally specifies its regular variation and tail structure. Deriving the extremal index for $Z_k$ requires two ingredients, the marginal tail of $Z_k$ and a running-maximum limit for the linear process $Z_k$, which we establish in turn.

\begin{lemma}\label{lemma_conv_closure}
Let $\zeta_1,\dots,\zeta_m$ be i.i.d., regularly varying with index $\alpha$ and tail-balance parameters $(p,q)$, and let $a_1,\dots,a_m \in \mathbb{R}$. Then
\begin{equation}\label{eq_finite_sum_tail}
\mathbb{P}\Big(\sum_{j=1}^m a_j \zeta_j > z\Big) \sim \Big[p\sum_{j=1}^m (a_j^+)^\alpha + q\sum_{j=1}^m (a_j^-)^\alpha\Big] \mathbb{P}(|\zeta_1|>z),
\end{equation}
as $z \to \infty$.
Here, $a_j^+ = a_j \vee 0$ and $a_j^- = -(a_j \wedge 0).$
\end{lemma}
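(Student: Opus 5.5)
The plan is to reduce the claim to two standard facts about regular variation: the tail of a single scaled summand, and the one-big-jump principle already used in the proof of Lemma~\ref{lemma_error_tail}.

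\emph{Step 1: a single scaled summand.} Write $\bar F(z) \triangleq \mathbb{P}(|\zeta_1|>z)$, which is regularly varying with index $-\alpha$. Fix $j$.
\begin{itemize}
\item If $a_j>0$, then $\mathbb{P}(a_j\zeta_j>z)=\mathbb{P}(\zeta_j>z/a_j)$. Definition~\ref{def_tail_balance} gives $\mathbb{P}(\zeta_j>z/a_j)\sim p\,\bar F(z/a_j)$, and regular variation gives $\bar F(z/a_j)\sim a_j^{\alpha}\bar F(z)$.
\item If $a_j<0$, then $\mathbb{P}(a_j\zeta_j>z)=\mathbb{P}(\zeta_j<-z/|a_j|)$, which is $\sim q\,|a_j|^\alpha\bar F(z)$. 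The $\le$ versus $<$ difference is immaterial, since it vanishes in the limit by regular variation.
\item If $a_j=0$, the summand is identically zero and contributes $0=p(a_j^+)^\alpha+q(a_j^-)^\alpha$.
\end{itemize}
Hence each nonzero summand $X_j\triangleq a_j\zeta_j$ satisfies $\mathbb{P}(X_j>z)\sim c_j\bar F(z)$ with $c_j=p(a_j^+)^\alpha+q(a_j^-)^\alpha$. The same computation gives $\mathbb{P}(X_j<-z)\sim c_j'\bar F(z)$ with $c_j'=p(a_j^-)^\alpha+q(a_j^+)^\alpha$.

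\emph{Step 2: additivity of upper tails.} I want to show $\mathbb{P}(\sum_j X_j>z)\sim\sum_j \mathbb{P}(X_j>z)$ for independent $X_j$, each with both tails $O(\bar F)$. By induction on $m$ it suffices to treat two independent variables $X,Y$ with $\mathbb{P}(X>z)\sim c_X\bar F(z)$ and $\mathbb{P}(Y>z)\sim c_Y\bar F(z)$, and with two-sided tails $O(\bar F)$. The induction step needs the partial sum again to have both tails asymptotically proportional to $\bar F$. Applying the same argument to $-X$ and $-Y$ supplies this.

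\emph{Lower bound.} Fix $\delta\in(0,1)$. Then
\[
\mathbb{P}(X+Y>z)\ge \mathbb{P}(X>(1+\delta)z,\;|Y|\le\delta z)+\mathbb{P}(Y>(1+\delta)z,\;|X|\le\delta z)-\mathbb{P}(X>(1+\delta)z)\,\mathbb{P}(Y>(1+\delta)z).
\]
By independence, $\mathbb{P}(|Y|\le \delta z)\to1$, and the product term is $O(\bar F(z)^2)=o(\bar F(z))$. This gives $\liminf \mathbb{P}(X+Y>z)/\bar F(z)\ge (c_X+c_Y)(1+\delta)^{-\alpha}$.

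\emph{Upper bound.} The event $\{X+Y>z\}$ is contained in
\[
\{X>(1-\delta)z\}\cup\{Y>(1-\delta)z\}\cup\{X>\delta z,\;Y>\delta z\}.
\]
By independence, the last event has probability $O(\bar F(\delta z)^2)=o(\bar F(z))$. This gives $\limsup \mathbb{P}(X+Y>z)/\bar F(z)\le (c_X+c_Y)(1-\delta)^{-\alpha}$.

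Letting $\delta\to0$ yields additivity. Summing the constants from Step 1 then gives \eqref{eq_finite_sum_tail}.

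\emph{Degenerate and main cases.} If every $c_j=0$, the upper bound alone shows the tail is $o(\bar F(z))$. This is consistent with reading \eqref{eq_finite_sum_tail} as $o(\bar F)$ when the bracket vanishes, and I would note that convention explicitly.

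The main obstacle is the upper-bound containment in Step 2. When $Y$ is very negative, $X$ must exceed $z$ by more than its own share, and I need to confirm that no mass is lost. The containment above covers this, because if neither variable exceeds $(1-\delta)z$ and they do not both exceed $\delta z$, then their sum is at most $z$. I would check this carefully, since it is exactly where independence and the two-sided $O(\bar F)$ bounds are used.
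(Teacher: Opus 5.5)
Your proof is correct and follows the paper's route: the same single-summand base case via tail balance and slow variation, then induction on $m$ using additivity of upper tails of independent summands. The one difference is that you prove this convolution-closure step directly by the $\delta$-sandwich instead of citing Resnick, which also covers real-valued summands and the zero-bracket case (read as $o(\mathbb{P}(|\zeta_1|>z))$) that the paper leaves implicit. One small correction to your closing remark: the upper-bound containment is purely deterministic, and neither bound uses the lower tails, so carrying two-sided $O(\bar F)$ bounds through the induction is harmless but unnecessary.
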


\begin{proof}
We prove by induction on $m$. We start with the case when $m=1$. Suppose $a_1 > 0$. Then
\[
\mathbb{P}(a_1\zeta_1 > x) = \mathbb{P}\Big(\zeta_1 > \frac{x}{a_1}\Big).
\]
By regular variation, $\mathbb{P}(|\zeta_1|>t) = t^{-\alpha}L(t)$ for a slowly varying function $L$, and by the tail-balance property, $\mathbb{P}(\zeta_1>t) \sim p\,\mathbb{P}(|\zeta_1|>t)$ as $t\to\infty$. Setting $t = x/a_1$,
\[
\mathbb{P}\Big(\zeta_1 > \frac{x}{a_1}\Big) \sim p \left(\frac{x}{a_1}\right)^{-\alpha} L\!\left(\frac{x}{a_1}\right) = p\, a_1^\alpha\, x^{-\alpha}\, L\!\left(\frac{x}{a_1}\right).
\]
Since $L$ is slowly varying, $L(x/a_1)/L(x) \to 1$ as $x\to\infty$, it yields that $\mathbb{P}(a_1\zeta_1>x)$ is
\[
p\,a_1^\alpha\, x^{-\alpha}L(x) = p\,a_1^\alpha\,\mathbb{P}(|\zeta_1|>x) = p\,(a_1^+)^\alpha\,\mathbb{P}(|\zeta_1|>x).
\]
For $a_1<0$, dividing by $a_1$ reverses the inequality: $a_1\zeta_1>x \iff \zeta_1 < x/a_1 = -x/|a_1|$. By the tail-balance property for the lower tail, $\mathbb{P}(\zeta_1 \le -t) \sim q\,\mathbb{P}(|\zeta_1|>t)$, and the same slowly-varying argument gives
\[
\mathbb{P}(a_1\zeta_1>x) \sim q\,|a_1|^\alpha\,\mathbb{P}(|\zeta_1|>x) = q\,(a_1^-)^\alpha\,\mathbb{P}(|\zeta_1|>x).
\]
Together, these establish~\eqref{eq_finite_sum_tail} for $m=1$. 

Now, we suppose~\eqref{eq_finite_sum_tail} holds for $m-1$ and prove that it holds for $m$ consequently. By the convolution closure property of regularly varying tails~\citep[Ch.\ 2]{Resnick_2007}, for independent regularly varying random variables $U$ and $V$ with a common index $\alpha$,
\[
\mathbb{P}(U+V>x) \sim \mathbb{P}(U>x) + \mathbb{P}(V>x), \qquad x\to\infty.
\]
Set $U = \sum_{j=1}^{m-1} a_j\zeta_j$, which is regularly varying by the inductive hypothesis, and $V = a_m\zeta_m$, also regularly varying by the base case; $U$ and $V$ are independent since they involve disjoint subsets of the i.i.d.\ sequence $\{\zeta_j\}$. Then $\mathbb{P}\Big(\sum_{j=1}^{m} a_j\zeta_j > x\Big)$ is $\Big[p\sum_{j=1}^{m-1}(a_j^+)^\alpha + q\sum_{j=1}^{m-1}(a_j^-)^\alpha\Big]\mathbb{P}(|\zeta_1|>x) + \Big[p\,(a_m^+)^\alpha + q\,(a_m^-)^\alpha\Big]\mathbb{P}(|\zeta_1|>x) = \Big[p\sum_{j=1}^{m}(a_j^+)^\alpha + q\sum_{j=1}^{m}(a_j^-)^\alpha\Big]\mathbb{P}(|\zeta_1|>x$,
which establishes~\eqref{eq_finite_sum_tail} for $m$, completing the induction.
\end{proof}

\begin{lemma}\label{prop_marginal_tail}
Under Assumption~\ref{assump_dominant_direction}, from Lemma~\ref{lemma_conv_closure}, it follows that
\begin{equation*}
\mathbb{P}(Z_k > z) \sim \Big[p\sum_{j=0}^\infty (g_j^+)^\alpha + q\sum_{j=0}^\infty (g_j^-)^\alpha\Big] \mathbb{P}(|\zeta_1|>z), z \to \infty.
\end{equation*}
\end{lemma}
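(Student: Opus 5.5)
We will prove the statement by passing from the finite sums covered by Lemma~\ref{lemma_conv_closure} to the infinite linear process $\sum_{j\ge0} g_j\zeta_{k-1-j}$. We then transfer the result to $Z_k$ through the tail equivalence established just before the lemma.

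\emph{Step 1: truncation.} Fix $m$ and split
\[
\sum_{j\ge0} g_j\zeta_{k-1-j} = \underbrace{\sum_{j=0}^{m-1} g_j\zeta_{k-1-j}}_{U_m} + \underbrace{\sum_{j\ge m} g_j\zeta_{k-1-j}}_{R_m}.
\]
By Lemma~\ref{lemma_conv_closure},
\[
\mathbb{P}(U_m>z)\sim c_m\,\mathbb{P}(|\zeta_1|>z), \qquad c_m = p\sum_{j<m}(g_j^+)^\alpha+q\sum_{j<m}(g_j^-)^\alpha .
\]
Because $|g_j|\le \|c\|\,\|A_K^j\|$ decays geometrically ($\rho(A_K)<1$), the constants $c_m$ increase to the finite limit $c_\infty$ that appears in the statement.

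\emph{Step 2: bracketing.} For $\delta\in(0,1)$ we have the sandwich
\[
\mathbb{P}(U_m>(1+\delta)z)-\mathbb{P}(R_m<-\delta z)\le \mathbb{P}(U_m+R_m>z)\le \mathbb{P}(U_m>(1-\delta)z)+\mathbb{P}(|R_m|>\delta z).
\]
Dividing by $\mathbb{P}(|\zeta_1|>z)$ and using regular variation gives
\[
\limsup_{z\to\infty} \le c_m(1-\delta)^{-\alpha} + \delta^{-\alpha}\limsup_{z\to\infty}\frac{\mathbb{P}(|R_m|>z)}{\mathbb{P}(|\zeta_1|>z)},
\]
together with the matching lower bound. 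Letting $m\to\infty$ first and then $\delta\to0$ yields the claim for the linear process, provided the remainder term vanishes.

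\emph{Step 3: the remainder (main obstacle).} We need
\[
\lim_{m\to\infty}\limsup_{z\to\infty}\frac{\mathbb{P}(|R_m|>z)}{\mathbb{P}(|\zeta_1|>z)}=0 .
\]
We handle this with the standard uniform tail bound for infinite linear combinations of i.i.d.\ regularly varying variables. One route is to cite the Mikosch--Samorodnitsky / Hult--Samorodnitsky result~\citep{hult2008tail} already used in Lemma~\ref{lemma_state_tail_invariant}, which gives
\[
\limsup_{z\to\infty}\frac{\mathbb{P}(|R_m|>z)}{\mathbb{P}(|\zeta_1|>z)}\le C\sum_{j\ge m}|g_j|^{\alpha-\varepsilon}
\]
for some small $\varepsilon>0$ under geometric decay of the coefficients. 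An alternative, self-contained route splits each summand at level $z$ into large and small parts. The large parts are controlled by a union bound and Potter's bounds, $\mathbb{P}(|g_j\zeta|>z)\le C|g_j|^{\alpha-\varepsilon}\mathbb{P}(|\zeta|>z)$. The small parts are controlled by Markov's inequality with moment order $\gamma<\alpha$ when $\alpha\le1$, or by a truncated-variance or Karamata-type bound when $\alpha>1$. In both routes geometric decay makes the tail sum vanish as $m\to\infty$. This is the only delicate step, and it is where the geometric decay of $g_j$ is used.

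\emph{Step 4: transfer to $Z_k$.} Finally, the tail equivalence $\mathbb{P}(Z_k>z)/\mathbb{P}\big(\sum_j g_j\zeta_{k-1-j}>z\big)\to1$, derived from Assumption~\ref{assump_dominant_direction}, transfers the asymptotics to $Z_k$. If that equivalence is to be justified rather than assumed, the $\eta$-part $\sum_j c^\top A_K^j\eta_{k-1-j}$ is shown to be $o(\mathbb{P}(|\zeta_1|>z))$ by the same argument as Step 3, using $\mathbb{P}(\|\eta\|>z)=o(\mathbb{P}(|\zeta|>z))$. It is then absorbed by the same $\delta$-bracketing as in Step 2.
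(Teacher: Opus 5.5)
Your proposal is correct and follows essentially the paper's route: truncate, apply Lemma~\ref{lemma_conv_closure} to the finite sum, and pass to the infinite series via the Hult--Samorodnitsky result, which applies because $g_j$ decays geometrically. Your $\delta$-bracketing and the condition $\lim_{m\to\infty}\limsup_{z\to\infty}\mathbb{P}(|R_m|>z)/\mathbb{P}(|\zeta_1|>z)=0$ make explicit what the paper delegates to that citation, and your Step~4 justifies the tail equivalence to $Z_k$ that the paper simply assumes. One caution on your self-contained alternative for Step~3: when $\alpha>2$ a truncated-variance (Chebyshev) bound only gives $O(z^{-2})$, which is not $o(\mathbb{P}(|\zeta_1|>z))$. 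There you would need a Rosenthal- or Fuk--Nagaev-type bound, or more simply the union bound $\mathbb{P}(|R_m|>z)\le\sum_{j\ge m}\mathbb{P}\big(|\zeta_1|>\pi_j z/|g_j|\big)$ with weights $\pi_j=(1-r)r^{j-m}$, $\rho_K<r<1$, combined with Potter's bounds, which works for every $\alpha>0$.
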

\begin{proof}
By Lemma~\ref{lemma_conv_closure}, for each fixed $m$ the truncated sum $Z_k^{(m)} \triangleq \sum_{j=0}^m g_j\zeta_{k-1-j}$ satisfies
\[
\mathbb{P}(Z_k^{(m)} > z) \sim \Big[p\sum_{j=0}^m (g_j^+)^\alpha + q\sum_{j=0}^m (g_j^-)^\alpha\Big] \mathbb{P}(|\zeta_1|>z), z\to\infty.
\]
Since $\{g_j\}$ is deterministic (as $A_K$, $c$, $b$ are fixed) and $|g_j| \le \|c\|C_K\rho_K^j$ decays geometrically. The summability conditions of~\citep[Eqs.\ (3.2)--(3.3)]{hult2008tail}, that is, $\sum_j|g_j|^{\alpha-\varepsilon}<\infty$ for some $\varepsilon>0$ if $\alpha\le2$, or $\sum_j g_j^2<\infty$ if $\alpha>2$, hold. The extension of the finite truncation to the full infinite series $Z_k = \sum_{j=0}^\infty g_j\zeta_{k-1-j}$ then follows from~\citep[corollary\ 3.1, remark\ 3.3]{hult2008tail}.  \end{proof}

With the marginal tail of $Z_k$ established, we now turn to its running-maximum behavior. We first state a general running-maximum limit for linear processes of regularly varying variables, which we then specialize to derive the extremal index for $Z_k$.

\begin{lemma}[Th. 3.1~\citep{davis1985limit}]\label{lemma_moving_avg_max}
Let $\{\zeta_k\}$ be i.i.d., regularly varying with index $\alpha$ and tail-balance parameters $(p,q)$, and let $\{g_j\}_{j\ge0}$ be a deterministic sequence with $\sum_{j=0}^\infty |g_j|^\delta < \infty$ for some $\delta<\alpha$, $\delta\le1$. Let $r_n$ denote the normalizing sequence such that $n\,\mathbb{P}(|\zeta_1|>r_n z)\to z^{-\alpha}$ for all $x>0$. Then as $n\rightarrow \infty$, $Z_k \triangleq \sum_{j=0}^\infty g_j\zeta_{k-1-j}$ satisfies, for every fixed $x>0$,
\begin{equation}\label{eq_moving_avg_max_limit}
\mathbb{P}\big(r_n^{-1}\max_{k\le n} Z_k \le z\big) \to \exp\big(-(g_+^\alpha p + g_-^\alpha q)z^{-\alpha}\big),
\end{equation}
with $g_+ \triangleq \sup_j(g_j\vee 0) = \sup_j g_j^+$ and $g_- \triangleq \sup_j(-g_j\vee 0) = \sup_j g_j^-$.
\end{lemma}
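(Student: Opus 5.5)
Although this lemma is cited from Th.~3.1 of~\citep{davis1985limit}, we would prove it with the standard point-process argument for moving averages of regularly varying noise, in three steps: a truncation to finitely many lags, a Poisson limit for the extremes of the noise, and a continuous-mapping step that turns the extremes of the noise into extremes of $Z_k$.

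\textbf{Step 1: truncation.} For fixed $m$ set $Z_k^{(m)} = \sum_{j=0}^{m} g_j\zeta_{k-1-j}$. We first show that the remainder $Z_k - Z_k^{(m)}$ is negligible at scale $r_n$. Concretely, for every $\gamma>0$ we want
\[
\lim_{m\to\infty}\limsup_{n\to\infty}\mathbb{P}\Big(r_n^{-1}\max_{k\le n}\Big|\sum_{j>m} g_j\zeta_{k-1-j}\Big|>\gamma\Big)=0 .
\]
This is where the condition $\sum_j |g_j|^\delta<\infty$ with $\delta<\alpha$, $\delta\le 1$, enters.

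\begin{itemize}
\item When $\alpha\le 1$, we bound the maximum by $\sum_{j>m}|g_j|\max_{k}|\zeta_{k-1-j}|$ and combine this with the $\delta$-subadditivity inequality $(\sum|x_i|)^\delta\le\sum|x_i|^\delta$.
\item When $\alpha>1$, the $\zeta_k$ may have nonzero mean, so we first truncate them at level $r_n$ and then use Markov's inequality on the $\delta$-th moment.
\end{itemize}

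In both cases the bound reduces to $C\sum_{j>m}|g_j|^\delta$ times a quantity that is bounded in $n$, and this tends to zero as $m\to\infty$.

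\textbf{Step 2: Poisson limit for the noise.} Because the $\zeta_k$ are i.i.d.\ and regularly varying with tail-balance parameters $(p,q)$, the point process $\sum_{k\le n}\delta_{(k/n,\,\zeta_k/r_n)}$ converges to a Poisson random measure. Its mean measure is $\mathrm{Leb}\times\mu$, where
\[
\mu(dx)=\alpha\big(p\,x^{-\alpha-1}\mathbf 1_{x>0}+q\,|x|^{-\alpha-1}\mathbf 1_{x<0}\big)\,dx .
\]
This follows from $n\,\mathbb{P}(\zeta_1/r_n\in\cdot)\to\mu$ vaguely, which is exactly Assumption~\ref{assump_dominant_direction} together with Definition~\ref{def_tail_balance}.

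\textbf{Step 3: continuous mapping for the truncated process.} In the limit, $Z_k^{(m)}$ is large only when a single $\zeta_l$ is large; this is the one-big-jump structure already used in Lemma~\ref{lemma_conv_closure}. Such a jump $\zeta_l=r_n x$ contributes the values $g_j r_n x$ for $j=0,\dots,m$, and the largest of these is $r_n\max_{j\le m}(g_j x)$.

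We make this rigorous by applying a continuous functional to the $(m+1)$-dimensional vector point process $\sum_k \delta_{(k/n,\,r_n^{-1}(\zeta_k,\dots,\zeta_{k-m}))}$. Its limit is concentrated on the coordinate axes, because two simultaneously large $\zeta$'s are asymptotically negligible. Hence $\mathbb{P}\big(r_n^{-1}\max_{k\le n}Z_k^{(m)}\le z\big)$ converges to the probability that the Poisson process has no point $x$ with $\max_{j\le m}g_jx>z$. That probability is $\exp(-\mu\{x:\max_j g_jx>z\})$. For $x>0$ the condition reads $x>z/g_+^{(m)}$, and for $x<0$ it reads $|x|>z/g_-^{(m)}$. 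This gives the limit $\exp\big(-((g_+^{(m)})^\alpha p+(g_-^{(m)})^\alpha q)z^{-\alpha}\big)$, where $g_\pm^{(m)}$ are the suprema of $g_j^\pm$ over $j\le m$.

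Finally we let $m\to\infty$. Since $g_j\to 0$, the supremum defining $g_\pm$ is attained at finitely many indices, so $g_\pm^{(m)}=g_\pm$ for all large $m$. Combining Steps 1--3 with the standard $\limsup/\liminf$ sandwich (shift $z$ by $\pm\gamma$ and then let $\gamma\to 0$) yields~\eqref{eq_moving_avg_max_limit}.

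We expect the main obstacle to be Step 1, the uniform negligibility of the infinite tail of the moving average over $n$ time indices. This is where the moment condition on $\{g_j\}$ is essential, and the case $1<\alpha\le2$ needs care with centering. For our application the condition is immediate, because $g_j=c^\top A_K^jb$ decays geometrically.
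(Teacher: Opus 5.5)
Your proposal is correct and is essentially the truncation, Poisson-limit, continuous-mapping argument of Davis and Resnick that the paper relies on by citation; the paper gives no independent proof. You also correctly see that the lagged vector process clusters on the axes, so the exponent is $\mu\{x:\max_{j\le m}g_jx>z\}$ rather than a sum over lags.

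One simplification is available: your $\alpha\le1$ bound $\max_{k\le n}|\sum_{j>m}g_j\zeta_{k-1-j}|\le\sum_{j>m}|g_j|\max_{k\le n}|\zeta_{k-1-j}|$ works verbatim for every $\alpha$ when combined with $\delta$-subadditivity and $\sup_n\mathbb{E}[(r_n^{-1}\max_{k\le n}|\zeta_k|)^\delta]<\infty$, which holds since $\delta<\alpha$. So no truncation or centering is needed when $\alpha>1$; centering matters for partial sums, not for maxima.
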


\begin{Thm}\label{thm_extremal_index}
Suppose $\{\zeta_k\}$ is i.i.d.\ and regularly varying with index $\alpha$ and tail parameters $(p,q)$ as in Assumption~\ref{assump_dominant_direction}, and $\|A_K^j\| \le C_K \rho_K^j$ with $\rho_K \in [0,1)$ as in Lemma~\ref{lemma_state_tail_invariant}, so that $\sum_{j=0}^\infty |g_j|^\delta < \infty$ for some $\delta<\alpha$, $\delta \le 1$. Then $Z_k$ is regularly varying with tail index $\alpha$, and its running maximum satisfies~\eqref{eq_moving_avg_max_limit}.
Consequently, $\{Z_k\}$ admits the extremal index
\begin{equation}\label{eq_extremal_index_formula}
\theta = \frac{g_+^\alpha p + g_-^\alpha q}{\displaystyle p\sum_{j=0}^\infty (g_j^+)^\alpha + q\sum_{j=0}^\infty (g_j^-)^\alpha} \in (0,1].
\end{equation}
\end{Thm}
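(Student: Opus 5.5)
The plan is to combine the marginal tail from Lemma~\ref{prop_marginal_tail} with the running-maximum limit from Lemma~\ref{lemma_moving_avg_max}, and then to read off $\theta$ by matching against Definition~\ref{def_extremal_index}.

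\textbf{Step 1: summability of $g_j$.} Since $|g_j| = |c^\top A_K^j b| \le \|c\|\,C_K\rho_K^j$, the sequence decays geometrically. Hence $\sum_j |g_j|^\delta<\infty$ for every $\delta>0$, and in particular for some $\delta<\alpha$ with $\delta\le 1$. This verifies the hypotheses of both Lemma~\ref{prop_marginal_tail} and Lemma~\ref{lemma_moving_avg_max}. Lemma~\ref{prop_marginal_tail} then gives
\[
\mathbb{P}(Z_1>z)\sim D\,\mathbb{P}(|\zeta_1|>z), \qquad D \triangleq p\sum_j (g_j^+)^\alpha + q\sum_j (g_j^-)^\alpha .
\]
Because $|\zeta_1|$ is regularly varying with index $\alpha$, so is $Z_k$, provided $D>0$. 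We have $D>0$ as long as $g_j\neq 0$ for some $j$ with the matching balance weight positive. This nondegeneracy is implicit in the claim $\theta\in(0,1]$, and I would state it explicitly.

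\textbf{Step 2: threshold sequence and limit.} Fix $\tau>0$ and set $\mu_n(\tau)=r_n z$ with $z=(D/\tau)^{1/\alpha}$. Then
\[
n\,\mathbb{P}(Z_1>r_n z)\sim D\,n\,\mathbb{P}(|\zeta_1|>r_n z)\to D z^{-\alpha}=\tau,
\]
so condition (i) of Definition~\ref{def_extremal_index} holds. For condition (ii), stationarity of $Z_k$ comes from Lemma~\ref{lemma_state_tail_invariant}, and~\eqref{eq_moving_avg_max_limit} then gives
\[
\mathbb{P}\Big(\max_{k\le n}Z_k\le\mu_n(\tau)\Big)\to\exp\big(-(g_+^\alpha p+g_-^\alpha q)z^{-\alpha}\big)=\exp\Big(-\tfrac{g_+^\alpha p+g_-^\alpha q}{D}\,\tau\Big).
\]
Comparing with $\exp(-\theta\tau)$ yields~\eqref{eq_extremal_index_formula}.

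\textbf{Step 3: range of $\theta$.} The bound $\theta\le1$ follows because $(g_+)^\alpha=\sup_j (g_j^+)^\alpha\le\sum_j (g_j^+)^\alpha$, and likewise for the negative parts. The bound $\theta>0$ holds because the suprema are attained: $g_j\to0$, so any positive supremum is a maximum. Consequently, whenever the sum $\sum_j (g_j^+)^\alpha$ weighted by $p$ is positive, so is $p\,g_+^\alpha$, and similarly for $q$. Thus the numerator is positive whenever $D$ is.

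\textbf{Main obstacle.} The delicate point is the tail-equivalence reduction from the true $Z_k=c^\top x_k$ to the scalar linear process $\sum_j g_j\zeta_{k-1-j}$ under Assumption~\ref{assump_dominant_direction}. The running-maximum limit in Lemma~\ref{lemma_moving_avg_max} applies to the latter, not directly to the former.

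I would first try to show that the residual $\sum_j c^\top A_K^j\eta_{k-1-j}$ has $\max_{k\le n}$ of order $o_P(r_n)$. The natural route is a union bound together with $n\,\mathbb{P}(\|\eta\|>\varepsilon r_n)\to0$, applied to the finite truncations, with a geometric tail estimate for the remainder.

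If that proves too cumbersome, I would state the theorem for the tail-equivalent process directly, as the paper's phrasing "retaining the tail-dominant term" suggests. I would also note the subtlety that $\max_k$ involves all $k$ jointly, so marginal tail equivalence alone does not suffice.
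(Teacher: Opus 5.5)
Your proposal is correct and follows the paper's proof essentially step for step. The paper likewise sets $\mu_n(\tau)=r_n(D/\tau)^{1/\alpha}$, verifies $n\,\mathbb{P}(Z_1>\mu_n)\to\tau$ via Lemma~\ref{prop_marginal_tail} and slow variation, and substitutes $z=(D/\tau)^{1/\alpha}$ into the Davis--Resnick limit of Lemma~\ref{lemma_moving_avg_max} to obtain $\theta=(g_+^\alpha p+g_-^\alpha q)/D$. Your extra points go beyond the paper: the check that $\theta\in(0,1]$, the nondegeneracy requirement $D>0$, and the reduction from $c^\top x_k$ to $\sum_j g_j\zeta_{k-1-j}$ via $\max_{k\le n}|\sum_j c^\top A_K^j\eta_{k-1-j}|=o_P(r_n)$. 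The paper does not address these; it simply applies Lemma~\ref{lemma_moving_avg_max} to the tail-equivalent linear process. Your additions therefore make the argument more complete without changing it.
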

\begin{proof}
Since $\{g_j\}$ is deterministic and decays geometrically, Lemma~\ref{lemma_moving_avg_max} applies to $Z_k = \sum_j g_j\zeta_{k-1-j}$, giving, for every fixed $z>0$,
\begin{equation}\label{eq_running_max_cited}
\mathbb{P}\big(r_n^{-1}\max_{k\le n} Z_k \le z\big) \to \exp(-Nz^{-\alpha}), N \triangleq g_+^\alpha p + g_-^\alpha q,
\end{equation}
where $r_n$ is the normalizing sequence such that $n\,\mathbb{P}(|\zeta_1|>r_n z)\to z^{-\alpha}$ for all $z>0$. By Lemma~\ref{prop_marginal_tail}, $\mathbb{P}(Y_1>y)\sim D\,y^{-\alpha}L(y)$ as $y\to\infty$, with $D \triangleq p\sum_j(g_j^+)^\alpha+q\sum_j(g_j^-)^\alpha$ and $L$ the slowly varying function associated with $\zeta_1$.

Fix $\tau>0$ and set $\mu_n \triangleq r_n\,(D/\tau)^{1/\alpha} = r_n s$, that is, $\mu_n(\tau)$ with the $\tau$-dependence suppressed for brevity throughout the remainder of this proof. Then
\[
n\,\mathbb{P}(Y_1>\mu_n) \sim n D (r_n s)^{-\alpha} L(r_n s) = D s^{-\alpha} \left[n r_n^{-\alpha} L(r_n s)\right].
\]
Hence,
\[
n\,\mathbb{P}(Y_1>\mu_n) \sim   D\,(D/\tau)^{-1}\cdot\big[n\,r_n^{-\alpha}L(r_n(D/\tau)^{1/\alpha})\big] \to \tau,
\]
using that $L$ is slowly varying, so $L(r_n(D/\tau)^{1/\alpha})/L(r_n)\to1$ as $n\to\infty$. This verifies Leadbetter's condition $n\,\mathbb{P}(Y_1>\mu_n)\to\tau$ in Definition~\ref{def_extremal_index}.
Substituting $z = \mu_n/r_n = (D/\tau)^{1/\alpha}$ into~\eqref{eq_running_max_cited}, it yields $\mathbb{P}\big(\max_{k\le n}Z_k\le \mu_n\big)$ satisfies
\[
\mathbb{P}\big(r_n^{-1}\max_{k\le n}Z_k \le (D/\tau)^{1/\alpha}\big) \to \exp \left(-N \tau /D \right) = \exp(-\theta\tau),
\]
which is Leadbetter's defining condition with $\theta = N/D = (g_+^\alpha p+g_-^\alpha q)/D$.
\end{proof}

\begin{remark}
A Fréchet distribution with shape parameter $\alpha>0$ and scale parameter $\sigma>0$ has cumulative distribution function $F(x) = \exp\big(-(x/\sigma)^{-\alpha}\big)$ for $x>0$. The limit $\exp\big(-(g_+^\alpha p + g_-^\alpha q)\,x^{-\alpha}\big)$ in Theorem~\ref{thm_extremal_index} takes exactly this form, with shape parameter $\alpha$ and scale parameter $(g_+^\alpha p + g_-^\alpha q)^{1/\alpha}$. That is, the running maximum $r_n^{-1}\max_{k\le n}Z_k$ converges in distribution to a Fréchet distribution as $n\to\infty$.
\end{remark}

\subsection{Extremal Index Estimation}\label{sec_theta_estimation}
Theorem~\ref{thm_extremal_index} gives $\theta$ in closed form, derived from the model primitives $A_K$, $b$, $c$, $\alpha$, $p$, and $q$. In practical computation, we should instead use a statistical estimator of this quantity from a simulated trajectory. Such an estimator is called the intervals estimator~\citep{FerroSegers2003}. This estimator is not a contribution of this paper.  

As two classic methods in extreme value theory, blocks and runs declustering address this problem by grouping exceedances into clusters, using a block length or a run length respectively. Both require this length to be chosen somewhat arbitrarily. The intervals estimator avoids this issue entirely, requiring only a threshold. It is built on the observation that clustering manifests directly in the times between successive threshold exceedances. Given a long trajectory of $Z_k$ and a threshold $\mu_n$, let $N$ denote the number of exceedances of $\mu_n$, let $S_1<\dots<S_N$ denote the exceedance times, and let $T_i=S_{i+1}-S_i$, $1\le i \le N-1$, denote the inter-exceedance times. As $\mu_n$ approaches the upper endpoint of the marginal distribution of $Z_k$, the rescaled inter-exceedance times converge in distribution to a two-component mixture, a point mass at zero with probability $1-\theta$, corresponding to exceedances within the same cluster, and an exponential random variable with mean $1/\theta$ with probability $\theta$, corresponding to the waiting time between clusters. A sequence with $\theta=1$ therefore produces inter-exceedance times that are asymptotically memoryless, and clustering appears as an over-dispersion of the observed $T_i$ relative to this baseline.

Matching the first two moments of the limiting mixture, $\mathbb{E}(T_\theta)=1$ and $\mathbb{E}(T_\theta^2)=2/\theta$, to their empirical counterparts gives the first branch of the estimator below. A second, bias-corrected branch instead matches moments of the inter-exceedance distribution at finite thresholds directly, which shrinks the smallest observed gaps toward zero and removes the first-order bias present in the first branch, but is only defined once $\max_i T_i>2$. Both branches are truncated at one, since the finite-sample moment ratio can exceed the theoretical upper bound $\theta\le1$. The resulting estimator is
\begin{equation*}
\hat\theta = \begin{cases} \min\!\Big(1,\ \dfrac{2\big(\sum_i T_i\big)^2}{(N-1)\sum_i T_i^2}\Big), & \max_i T_i \le 2,\\[10pt] \min\!\Big(1,\ \dfrac{2\big(\sum_i (T_i-1)\big)^2}{(N-1)\sum_i (T_i-1)(T_i-2)}\Big), & \max_i T_i > 2. \end{cases}
\end{equation*}
Ferro and Segers~\citep{FerroSegers2003} established that this estimator is \emph{consistent} as the threshold approaches the upper endpoint under standard mixing conditions on $\{Z_k\}$.

With the corrected target $\epsilon_\star$ from~\eqref{eq_theta_corrected_epsilon} in place of $\epsilon$, the tightened EVT-SMPC problem in Section~\ref{sec_rare_event_SMPC} becomes
\begin{align}
\nonumber
\min_{v_{0|t},\dots,v_{N-1|t}}
& \sum_{i=0}^{N-1} \bar x_{i|t}^\top Q \bar x_{i|t} + v_{i|t}^\top R v_{i|t} + \bar x_{N|t}^\top P \bar x_{N|t} \\
\text{s.t.} 
& \bar x_{0|t} = x_t, \\
\nonumber
& \bar x_{i+1|t} = A\bar x_{i|t} + Bv_{i|t}, \\
\nonumber
& v_{i|t} \in {\cal U}, \quad i = 0,\dots,N-1, \\
\nonumber
& c^\top \bar x_{i|t} \le y_{\max} - \hat q_{e,i}(1-\epsilon_\star), \quad i = 1,\dots,N.
\end{align}

\begin{remark}
This problem retains exactly the objective, dynamics, and input constraints of the EVT-SMPC problem in Section~\ref{sec_rare_event_SMPC}, and differs only in the tightening level of the obstacle constraint, so it requires no new solver or algorithmic structure, only a smaller $\epsilon_\star$ in the offline quantile computation already used there. Unlike the per-step guarantee of Section~\ref{sec_rare_event_SMPC}, this formulation bounds the probability of at least one independent excursion episode over the full horizon $N$ at the originally intended risk level $\epsilon$, rather than only the marginal per-step exceedance probability.
\end{remark}

\section{Simulation}
\subsection{Rare-event constrained SMPC under heavy-tailed disturbances}
We evaluate the EVT-tightened tube SMPC formulation on a nonlinear unicycle robot navigating past a circular obstacle under heavy-tailed position and heading disturbances. The robot state is $x_k=(x_{k,1},x_{k,2},\psi_k)$, with $(x_{k,1},x_{k,2})$ the planar position and $\psi_k$ the heading, and input $u_k=(v_k,\omega_k)$, evolving as
\begin{equation*}
x_{k+1} = f(x_k,u_k) + w_k, \  f(x_k,u_k) = \begin{pmatrix} x_{k,1}+\Delta t\,v_k\cos\psi_k \\ x_{k,2}+\Delta t\,v_k\sin\psi_k \\ \psi_k+\Delta t\,\omega_k \end{pmatrix},
\end{equation*}
subject to input saturation on $v_k$ and $\omega_k$. The disturbance $w_k$ acts additively on position and heading, with each component drawn i.i.d.\ from a Student-$t_\nu$ distribution with $\nu=3$ degrees of freedom, matching the heavy-tailed regime considered throughout the paper. The MPC prediction horizon is $N=12$. We compare four controllers: (i) the baseline is a heading-only go-to-goal law with no obstacle awareness, it does not represent or tighten any obstacle constraint at all, and serves only as a naive reference; (ii) the Gaussian and (iii) EVT controllers both solve the same tightened tube SMPC problem and differ only in how they tighten the obstacle constraint. At each prediction step, the dynamics and the obstacle-avoidance constraint are linearized with respect to the current nominal trajectory. The Gaussian controller tightens using a light-tailed quantile estimate, while the EVT controller tighten uses the generalized Pareto quantile estimate of the tube error tail developed in Section~\ref{sec_rare_event_SMPC}.
 
\begin{figure}[pos=htb]
    \centering
        \includegraphics[width=1\linewidth]{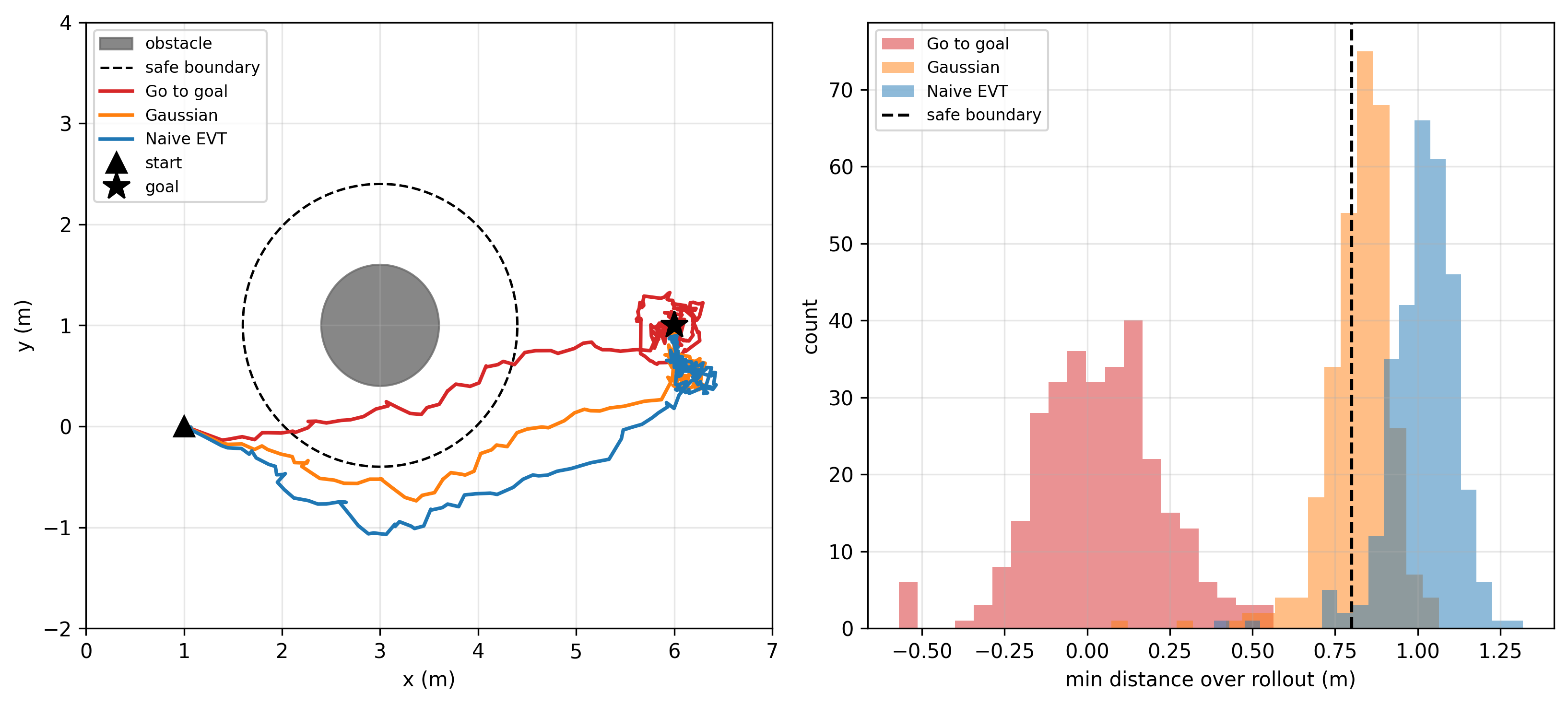}
    \caption{Left: representative trajectories; Right: Empirical distribution of the minimum distance over $300$ rollouts for the baseline, Gaussian, and naive EVT controllers.}
    \label{fig_mc_comparison}
\end{figure}
\FloatBarrier 
Fig.~\ref{fig_mc_comparison} shows representative trajectories of the controllers and the minimal distance to the safe boundary over 300 rollouts. Both tightened controllers solve the same tube SMPC problem with a target per-step violation probability $\epsilon=10^{-3}$. This $\epsilon$ sets how much each controller shrinks the safe region.

For the three controllers, we compare their performance by mean margin - the minimum distance to the safe boundary, averaged among all rollouts, and violation rate - the proportion of rollouts that breaches the safe boundary. 
It is clear that naive EVT outperforms among the three controllers with a mean margin $0.22$\,m and a violation rate of $3.0\%$. The baseline controller has a negative mean margin ($-0.77$\,m) and violates in all trials ($100\%$), confirming it is unsafe without any tightening. Whereas, the Gaussian fitted controller attains a smaller mean margin ($0.02$\,m, closest to the $d_{\text{safe}}$ boundary) but a violation rate of $32.3\%$. A smaller margin might look more efficient, but here it instead signals under-tightening. The Gaussian quantile assumes light tails, and hence underestimates the true heavy-tailed disturbance and allows the controller approach closer to the boundary, resulting in a much higher violation rate than EVT. The worst-case margin also improves monotonically across the three, from $-1.37$\,m (baseline) to $-0.73$\,m (Gaussian) to $-0.42$\,m (Naive EVT). Note that the gap between the small target $\epsilon=10^{-3}$ and the much larger realized violation rates is expected, since $\epsilon$ only bounds the marginal per-step violation probability at a single prediction step, instead of the probability of violating over the full rollout, and hence does not eliminate rare excursions.

\begin{figure}[pos=htb]
    \centering
    \includegraphics[width=0.7\linewidth]{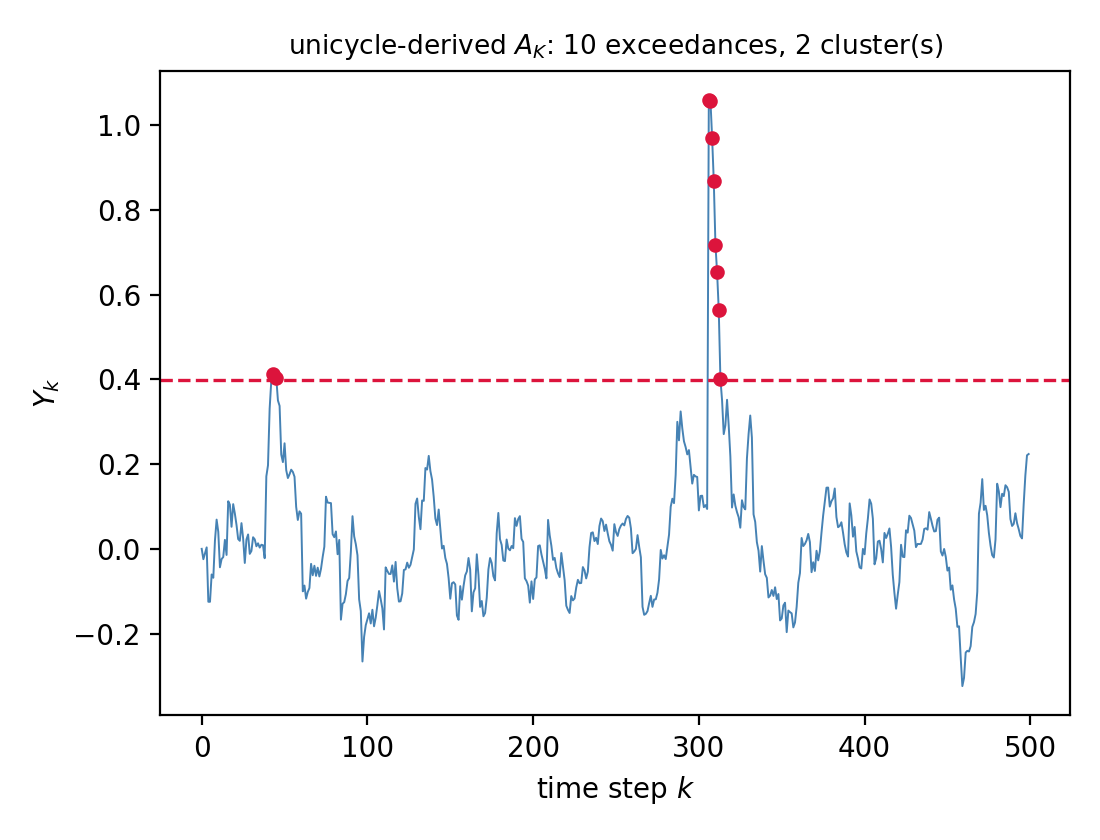}
    \caption{Extremal-index clustering on the unicycle's own closed-loop gain. A representative window showing exceedances collapsing into clusters.}
    \label{fig_theta_unicycle}
\end{figure}

\subsection{Clustering and the extremal index under closed-loop dynamics}
Figure~\ref{fig_theta_unicycle} illustrates the violation clustering directly on a representative trajectory under the unicycle example. This is precisely the clustering behavior the extremal index $\theta$ is designed to capture:  closed-loop dynamics concentrate exceedances into short clusters. A naive EVT tightening that treats exceedances as independent understates how often a rare-event episode.

We now carry out the $\theta$-corrected constraint~\eqref{eq_theta_corrected_constraint} on the same closed-loop system. For this scenario, the extremal index estimated directly from the closed-loop pipeline is $\hat\theta=0.248$, which for the $N=12$ prediction horizon corrects the per-step target from $\epsilon=10^{-3}$ down to $\epsilon_\star=3.36\times10^{-4}$.
\begin{figure}[pos=htb]
    \centering
        \includegraphics[width=1\linewidth]{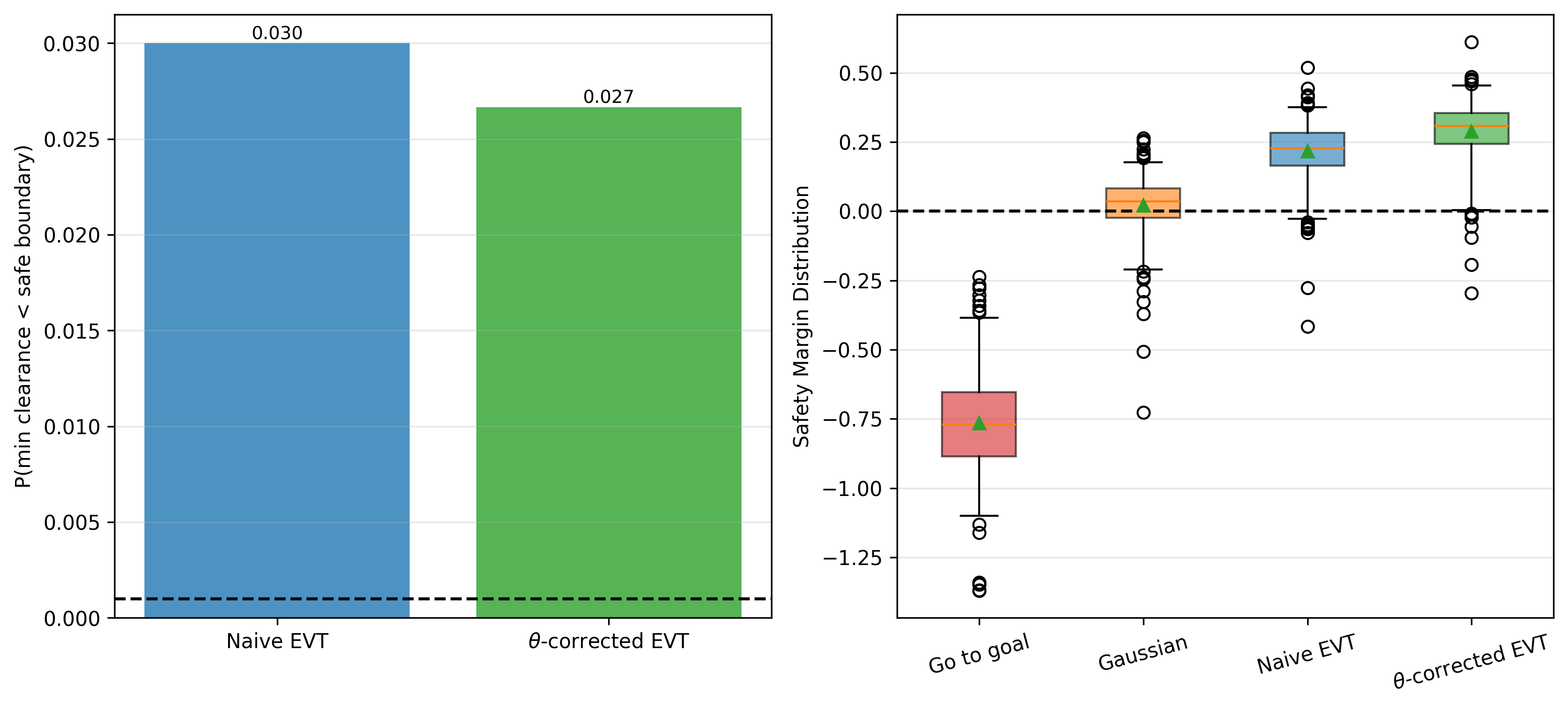}
    \caption{Effect of the cluster correction within the same tube-SMPC pipeline. Left: empirical violation probability for the naive and $\theta$-corrected EVT controllers against the target $\epsilon$; Right: safety margin distribution for all four controllers ($M=300$).}
    \label{fig_pipeline_theta}
\end{figure}
\FloatBarrier

Within this same closed-loop pipeline, the $\theta$-corrected EVT controller reduces the violation rate further, from $3.0\%$ for the naive EVT to $2.7\%$, and widens the mean margin from $0.22$\,m to $0.29$\,m, with the deepest observed excursion also shrinking from $-0.42$\,m to $-0.30$\,m. The improvement is consistent in direction with the cluster correction developed in Section~\ref{sec_clustering}, and neither EVT variant reaches the marginal target $\epsilon=10^{-3}$ in closed loop. This is expected: the closed-loop controller re-linearizes and re-estimates its tail quantile online at every step, compounding finite-sample estimation error on top of the clustering effect that $\theta$ alone corrects for. The absolute violation rates in the two studies are therefore not expected to achieve $\epsilon=10^{-3}$ numerically. The shared qualitative conclusion is that uncorrected naive EVT tightening does not account for the clustering effect, and that the cluster correction lowers the realized violation without changing the controller structure.

\section{Conclusion}
This paper proposed a noval stochastic MPC formulation under heavy-tailed disturbances where the tool of EVT is introduced for the first time. An EVT-based tail characterization of the tube error, addressing the case where the disturbance distribution is unknown and heavy-tailed, in place of the invariant-set or moment-based tightening used in classical robust and stochastic MPC is proposed. The resulting quantile is asymptotically exact as the target violation probability shrinks, while the online optimization retains the computational structure of nominal linear MPC - a quadratic program. Furthermore, a closed-form extremal index that closed-loop dynamics induce even under i.i.d.\ disturbances is derived to address the temporal clustering of rare excursions. This yields a $\theta$-corrected constraint that bounds the probability of a rare-event episode over the full horizon, rather than only the marginal per-step exceedance probability.
Simulation on a nonlinear unicycle under Student-$t$ disturbances validated both. The EVT-tightened controller achieved a substantially lower realized violation rate than the Gaussian-tightened controller by providing a larger safety margin appropriate to the heavier tail, and the $\theta$-corrected design further reduced the violation rate and widened that margin, with the observed violation-episode length in the isolated clustering validation matching the predicted clustering index.

\section*{Acknowledgments}
The code needed to reproduce the simulations is available at \href{https://github.com/XiuzhenYe/Stochastic-MPC-under-Heavy-Tailed-Disturbances-An-Extreme-Value-Theory-Approach}{this GitHub repository}.


\printcredits
 
\bibliographystyle{elsarticle-num}
\bibliography{bib}

@book{Durrett2019,
  title={Probability: Theory and Examples},
  author={Durrett, Rick},
  edition={5},
  series={Cambridge Series in Statistical and Probabilistic Mathematics},
  year={2019},
  publisher={Cambridge University Press}
}

@article{McAllisterRawlings2022,
  title={Nonlinear stochastic model predictive control: Existence, measurability, and stochastic asymptotic stability},
  author={McAllister, Robert D and Rawlings, James B},
  journal={IEEE Transactions on Automatic Control},
  volume={68},
  number={3},
  pages={1524--1536},
  year={2022},
  publisher={IEEE}
}

@article{Lorenzen2017,
  title={Constraint-tightening and stability in stochastic model predictive control},
  author={Lorenzen, Matthias and Dabbene, Fabrizio and Tempo, Roberto and Allg{\"o}wer, Frank},
  journal={IEEE Transactions on Automatic Control},
  volume={62},
  number={7},
  pages={3165--3177},
  year={2017},
  publisher={IEEE}
}

@article{HewingZeilinger2020,
  title={Scenario-based probabilistic reachable sets for recursively feasible stochastic model predictive control},
  author={Hewing, Lukas and Zeilinger, Melanie N},
  journal={IEEE Control Systems Letters},
  volume={4},
  number={2},
  pages={450--455},
  year={2020},
  publisher={IEEE}
}

@article{KohlerZeilinger2025,
  title={Predictive control for nonlinear stochastic systems: Closed-loop guarantees with unbounded noise},
  author={K{\"o}hler, Johannes and Zeilinger, Melanie N},
  journal={IEEE Transactions on Automatic Control},
  volume={70},
  number={11},
  pages={7382--7397},
  year={2025},
  publisher={IEEE}
}

@article{ArsenaultChapman2022,
  title={Toward scalable risk analysis for stochastic systems using extreme value theory},
  author={Arsenault, Evan and Wang, Yuheng and Chapman, Margaret P},
  journal={IEEE Control Systems Letters},
  volume={6},
  pages={3391--3396},
  year={2022},
  publisher={IEEE}
}

@article{SomayajiLi2024,
  title={Extreme risk mitigation in reinforcement learning using extreme value theory},
  author={Somayaji N S, Karthik and Wang, Yu and Schram, Malachi and Drgona, Jan and Halappanavar, Mahantesh and Liu, Frank and Li, Peng},
  journal={Transactions on Machine Learning Research},
  year={2024}
}

@article{Mesbah_JPC_19,
  title={Mixed stochastic-deterministic tube MPC for offset-free tracking in the presence of plant-model mismatch},
  author={Paulson, Joel A and Santos, Tito LM and Mesbah, Ali},
  journal={Journal of Process Control},
  volume={83},
  pages={102--120},
  year={2019},
  publisher={Elsevier}
}

@article{TubeMPC_05,
  title={Robust model predictive control of constrained linear systems with bounded disturbances},
  author={Mayne, David Q and Seron, Mar{\'\i}a M and Rakovi{\'c}, Sa{\v{s}}a V},
  journal={Automatica},
  volume={41},
  number={2},
  pages={219--224},
  year={2005},
  publisher={Elsevier}
}

@article{davis1985limit,
  title={Limit theory for moving averages of random variables with regularly varying tail probabilities},
  author={Davis, Richard and Resnick, Sidney},
  journal={The Annals of Probability},
  volume={13},
  number={1},
  pages={179--195},
  year={1985},
  publisher={JSTOR}
}

@book{Resnick_2007,
  title={Heavy-tail phenomena: probabilistic and statistical modeling},
  author={Resnick, Sidney I},
  year={2007},
  publisher={Springer}
}

@article{hult2008tail,
  title={Tail probabilities for infinite series of regularly varying random vectors},
  author={Hult, Henrik and Samorodnitsky, Gennady},
  journal={Bernoulli},
  volume={14},
  number={3},
  pages={838--864},
  year={2008},
  publisher={Bernoulli Society}
}

@article{Mesbah2016_overview,
  title={Stochastic model predictive control: An overview and perspectives for future research},
  author={Mesbah, Ali},
  journal={IEEE Control Systems Magazine},
  volume={36},
  number={6},
  pages={30--44},
  year={2016},
  publisher={IEEE}
}

@article{PaulsonIJC17,
  title={Stochastic model predictive control with joint chance constraints},
  author={Paulson, Joel A and Buehler, Edward A and Braatz, Richard D and Mesbah, Ali},
  journal={International Journal of Control},
  volume={93},
  number={1},
  pages={126--139},
  year={2020},
  publisher={Taylor \& Francis}
}

@article{KumarZavala2019,
  title={Hierarchical MPC schemes for periodic systems using stochastic programming},
  author={Kumar, Ranjeet and Wenzel, Michael J and Ellis, Matthew J and ElBsat, Mohammad N and Drees, Kirk H and Zavala, Victor M},
  journal={Automatica},
  volume={107},
  pages={306--316},
  year={2019},
  publisher={Elsevier}
}

@article{JalvingShinZavala2022,
  title={A graph-based modeling abstraction for optimization: concepts and implementation in {Plasmo.jl}},
  author={Jalving, Jordan and Shin, Sungho and Zavala, Victor M},
  journal={Mathematical Programming Computation},
  volume={14},
  number={4},
  pages={699--747},
  year={2022},
  publisher={Springer}
}

@article{KimPetraZavala2019,
  title={An asynchronous bundle-trust-region method for dual decomposition of stochastic mixed-integer programming},
  author={Kim, Kibaek and Petra, Cosmin G and Zavala, Victor M},
  journal={SIAM Journal on Optimization},
  volume={29},
  number={1},
  pages={318--342},
  year={2019},
  publisher={SIAM}
}

@article{FarinaGiulioniScattolini2016,
  title={Stochastic linear model predictive control with chance constraints -- a review},
  author={Farina, Marcello and Giulioni, Luca and Scattolini, Riccardo},
  journal={Journal of Process Control},
  volume={44},
  pages={53--67},
  year={2016},
  publisher={Elsevier}
}

@book{MarshallOlkin1979,
  title={Inequalities: Theory of Majorization and Its Applications},
  author={Marshall, Albert W and Olkin, Ingram},
  series={Mathematics in Science and Engineering},
  volume={143},
  year={1979},
  publisher={Academic Press},
  address={New York}
}

@article{CalafioreElGhaoui2006,
  title={On distributionally robust chance-constrained linear programs},
  author={Calafiore, Giuseppe C and El Ghaoui, Laurent},
  journal={Journal of Optimization Theory and Applications},
  volume={130},
  number={1},
  pages={1--22},
  year={2006},
  publisher={Springer}
}

@book{Coles2001,
  title={An Introduction to Statistical Modeling of Extreme Values},
  author={Coles, Stuart},
  series={Springer Series in Statistics},
  year={2001},
  publisher={Springer-Verlag},
  address={London}
}

@article{FerroSegers2003,
  title={Inference for clusters of extreme values},
  author={Ferro, Christopher A T and Segers, Johan},
  journal={Journal of the Royal Statistical Society: Series B (Statistical Methodology)},
  volume={65},
  number={2},
  pages={545--556},
  year={2003},
  publisher={Wiley}
}

@book{Leadbetter1983,
  title={Extremes and Related Properties of Random Sequences and Processes},
  author={Leadbetter, M R and Lindgren, G and Rootz{\'e}n, H},
  series={Springer Series in Statistics},
  year={1983},
  publisher={Springer-Verlag},
  address={New York}
}

@article{MayneRawlingsRaoScokaert2000,
  title={Constrained model predictive control: Stability and optimality},
  author={Mayne, David Q and Rawlings, James B and Rao, Christopher V and Scokaert, Pierre OM},
  journal={Automatica},
  volume={36},
  number={6},
  pages={789--814},
  year={2000},
  publisher={Elsevier}
}

@incollection{BemporadMorari1999,
  title={Robust model predictive control: A survey},
  author={Bemporad, Alberto and Morari, Manfred},
  booktitle={Robustness in Identification and Control},
  series={Lecture Notes in Control and Information Sciences},
  volume={245},
  pages={207--226},
  year={1999},
  publisher={Springer-Verlag},
  address={London}
}
 
\end{document}